\def\identity{\leavevmode\hbox{\small1\kern-3.8pt\normalsize1}}
\newtheorem{lemma}{Lemma}
\newcommand{\ket}[1]{\left | #1 \right\rangle}
\newcommand{\bra}[1]{\left \langle #1 \right |}
\newcommand{\half}{\mbox{$\textstyle \frac{1}{2}$}}
\newcommand{\smallfrac}[2][1]{\mbox{$\textstyle \frac{#1}{#2}$}}
\newcommand{\Tr}{\text{Tr}}
\newcommand{\braket}[2]{\left\langle #1|#2\right\rangle}
\newcommand{\proj}[1]{\ket{#1}\bra{#1}}
\renewcommand{\epsilon}{\varepsilon}
\begin{document}

\title{A Review of Perfect, Efficient, State Transfer and its Application as a Constructive Tool}
\date{\today}

\author{Alastair \surname{Kay}}
\affiliation{Max-Planck-Institut f\"ur Quantenoptik, Hans-Kopfermann-Str.\ 1,
D-85748 Garching, Germany}
\affiliation{Centre for Quantum Computation,
             DAMTP,
             University of Cambridge,
             Cambridge CB3 0WA, UK}
\affiliation{Centre for Quantum Technologies, National University of Singapore, 3 Science Drive 2, Singapore 117543}
\begin{abstract}
We review the subject of perfect state transfer; how one designs the (fixed) interactions of a chain of spins so that a quantum state, initially inserted on one end of the chain, is perfectly transferred to the opposite end in a fixed time. The perfect state transfer systems are then used as a constructive tool to design Hamiltonian implementations of other primitive protocols such as entanglement generation and signal amplification in measurements, before showing that, in fact, universal quantum computation can be implemented in this way.
\end{abstract}

\maketitle

\section{Introduction}

The study of quantum mechanical systems is no longer restricted to the characterization of the properties of naturally occurring systems. Rather, with the advent of Quantum Information, the manipulation and engineering of these systems to suit our purposes has come to the fore. One of the simplest ways to understand these manipulations is to consider not universal Quantum Computation, but simpler sub-protocols. The sub-protocol of state transfer was originally proposed by Bose \cite{Bos03}, the motivation being that in a quantum computer based on, for instance, a solid state architecture, interactions are typically local, but we want to apply entangling gates between distant qubits. While a sequence of {\sc swap} gates suffices to bring those distant qubits together, this is potentially prone to massive control errors, and it would be desirable to remove the need for such stringent control. One promising alternative is to integrate a `flying qubit' such as a photon within the same system in order to transfer qubits quickly and easily. This, however, requires the successful implementation of two information processing realisations rather than just one, which is already a daunting task. Instead, the idea is to achieve state transfer by providing a pre-fabricated unit, made from the same solid state technology but with severely limited capabilities (thereby easing the fabrication) which takes as input a state in one location, and outputs it at another. In principle, since it is not necessary to interact with the device except at the input and output, it can, to some extent, be more isolated from the environment and less susceptible to decoherence. Although this was the original motivation, whether these schemes are ever practically realized is largely irrelevant; they have already provided a huge degree of insight into engineering more complex quantum protocols.

In this paper, we provide a review of how quantum systems can be engineered precisely for the task of quantum state transfer with, in some sense, minimalistic properties; there should be no interaction with the system except at initialization and read-out, and the system Hamiltonian should remain fixed in time. We start, in Sec.~\ref{sec:2}, by proving necessary and sufficient conditions for perfect state transfer in a one dimensional system, and showing that this is equivalent to the protocol of entanglement distribution. These conditions provide an infinite family of solutions, and we discuss how to design systems with specific properties, and demonstrate that one particular solution is optimal with respect to a variety of parameters. The understanding of this basic mechanism can then be applied to a variety of different systems, such as harmonic oscillators, and allow for the possibility of long-range couplings.

In Sec.~\ref{sec:highexcit}, we consider what happens if more than a single excitation is present, enabling us to generate entanglement, or to encode such that initialisation of the state of the system is unnecessary. In Sec.~\ref{sec:4}, we give a brief treatment of errors, although, if used as a constructive technique, this section is largely irrelevant. Sec.~\ref{sec:highD} describes the only known class of solutions for perfect state transfer in geometries beyond a one-dimensional chain such that the distances of transfer drops off no faster that a polynomial of the number of system qubits. Finally, in Sec.~\ref{sec:6}, we show how perfect state transfer schemes can be be modified in order to generate different types of multipartite entanglement, such as GHZ or W-states, followed by the much stronger result of how to design a Hamiltonian to perfectly implement a quantum computation without any external control.

\section{Perfect Transfer in the Single Excitation Subspace} \label{sec:2}

The study of state transfer was initiated by Bose \cite{Bos03}, who considered a 1D chain of $N$ qubits (open boundary conditions) coupled by the time-independent Hamiltonian ($X$, $Y$ and $Z$ are the standard Pauli matrices)
$$
H_{\text{Bose}}=J\sum_{n=1}^{N-1}(X_nX_{n+1}+Y_nY_{n+1}+Z_nZ_{n+1}).
$$
His protocol started with a state $\ket{\psi}\ket{0}^{\otimes N-1}$, and simply allowed the Hamiltonian to evolve the system for some transfer time $t_0$, in the hope of creating the output state $\ket{0}^{\otimes N-1}\ket{\psi}$, i.e.~the unknown state $\ket{\psi}$ has transferred from one end of the chain to the other. For $N=2,3$, transfer can be achieved perfectly. For larger $N$, it can be shown that perfect transfer is impossible \cite{Kay:2004c}. It was subsequently observed that by modulating the couplings along the length of the chain, perfect state transfer can be achieved for all $N$ \cite{Christandl,Kay:2004c,lambrop}, and it is this technique designing fixed Hamiltonians requiring no further interaction that we review here, i.e.~although a vast array of techniques for arbitrarily accurate state transfer exist with a variety of control paradigms, we restrict purely to the case of perfect transfer with no active control. While we will primarily restrict to a chain of qubits, many of the results can be generalized to larger local Hilbert space dimensions \cite{li}, including harmonic oscillators \cite{plenio:0}.

The solutions provided here for perfect state transfer represent an extremal point in the space of possible solutions since they require a minimum of interaction with the system. As some level of control is introduced, perhaps through an encoding procedure (a Gaussian wavepacket \cite{osborne04} or across multiple chains \cite{Burgarth:2}), or by some temporal modulation of the system Hamiltonian \cite{haselgrove04,networks,dark}, this can trade against the engineering requirements imposed here, and it should come as no surprise that a plethora of protocols for high quality (possibly non-deterministic) state transfer protocols have arisen. A review of some of these can be found in \cite{Bose_review}. 

While there are also extensions of these results to topologies other than one dimensional chains, such as \cite{Kay:2004c,simone3,facer,simone2,feder,topology,ahmadi}, these are all subject to the significant criticisms that the transfer distance is no better than $O(\log N)$, and they require very high coordination number. These facts suggest that they are unlikely to be realistic in a physical device, which should perhaps be restricted to localized couplings on a lattice of between 1 and 3 spatial dimensions, and very cumbersome for application in other settings. For this reason, we will henceforth restrict mainly to chains, and only mention briefly a single example\footnote{the only construction that we are aware of which gives perfect transfer in a non-1D network with reasonable transfer distance} of a more general network (Sec.~\ref{sec:highD}). Nevertheless, it is important to note that there are potentially good reasons for studying these networks. Firstly, they can typically be constructed with fixed (i.e.~not spatially varying) coupling strengths, which could reduce engineering requirements, albeit at the cost of other engineering requirements. Secondly, these more general networks admit the possibility of being able to choose which of several output nodes a state should be routed to, either when the network is first constructed \cite{facer} or, potentially, during the transfer\footnote{The routing problem can be solved on a regular lattice topology in any number of spatial dimensions with only a very mild relaxation of assumptions, and hence has a large transfer distance \cite{in_prep}.}. Finally, there is the motivation from classical network theory, that some networks, such as those based on circulant graphs, are extremely robust to imperfections. These results, however, have not yet been transferred to the quantum regime.

\begin{figure}
\begin{center}
\includegraphics[width=0.45\textwidth]{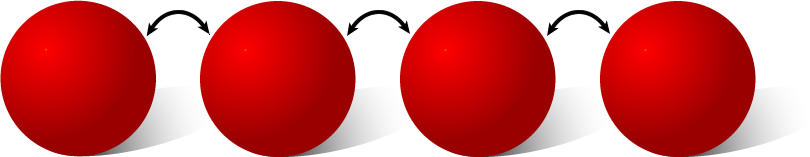}
\end{center}
\vspace{-0.5cm}
\caption{An array of qubits coupled by a nearest-neighbor interaction.} \label{fig:basica}
\vspace{-0.3cm}
\end{figure}

Thus, concentrating on the most direct method of transfer, we consider a chain of qubits coupled with the nearest-neighbor Hamiltonian
$$
H=\half\sum_{n=1}^{N-1}J_n(X_nX_{n+1}+Y_nY_{n+1})-\sum_{n=1}^NB_nZ_n,
$$
which is known as the $XX$ Hamiltonian. The coefficients $J_n$ and $B_n$ are real numbers. The analysis of state transfer is vastly simplified by the observation that
$$
\left[H,\sum_{n=1}^NZ_n\right]=0,
$$
which means that $H$ divides into a series of subspaces, characterized by the number of qubits that are in the $\ket{1}$ state. This is referred to as the number of excitations. It is not necessary to restrict to a Hamiltonian with this subspace form \cite{paternostro:1,kay-2006b}; some work has been done on Hamiltonians such as
$$
H'=\sum_{n=1}^{N-1}J_nZ_nZ_{n+1}+\sum_{n=1}^NB_nX_n,
$$
where perfect state transfer can be achieved  \cite{paternostro:1}, and many similar properties to those which we will demonstrate can be recovered \cite{paternostro:2}. We will see how these cases are also handled by the same formalism in Sec.~\ref{sec:bogoliubov}.

Most of the state transfer properties can be derived by considering the subspaces containing no more than 1 excitation and, in this case there is a large variety of other models such as Heisenberg Hamiltonians and harmonic oscillators to which these results also apply. We will explore some of these extensions in Sec.~\ref{sec:harm}.

We shall initially assume that the system of spins is prepared in the state $\ket{\underline{0}}=\ket{0}^{\otimes N}$, and that an unknown state $\ket{\psi}$ can be introduced onto spin 1 perfectly. The first of these assumptions will be relaxed in Sec.~\ref{sec:sysinit}, although the second one is necessary in a fixed Hamiltonian setting. In order to relax that constraint, one could consider time-varying coupling strengths $J_{1}$ and $J_{N-1}$ \cite{haselgrove04,Kay:08b, bruder}. By expanding $\ket{\psi}=\alpha\ket{0}+\beta\ket{1}$, we observe that the component of the initial state $\alpha\ket{0}^{\otimes N}$ is just required to evolve to the final state $\alpha\ket{0}^{\otimes N}$, but since $\ket{0}^{\otimes N}$ is an eigenstate of $H$, this happens automatically (up to a phase factor). Consequently, it suffices to consider the task $\ket{1}\ket{0}^{\otimes N-1}\rightarrow\ket{0}^{\otimes N-1}\ket{1}$, which allows the restriction to the single excitation subspace of $H$. This is described by an $N\times N$ matrix, with the basis elements of this subspace being given by
$$
\ket{\underline{n}}=\ket{0}^{\otimes n-1}\ket{1}\ket{0}^{\otimes N-n}.
$$
Using this basis, the Hamiltonian is represented by the matrix
$$
H_1=\left(
\begin{array}{cccccc}
B_1 & J_1 & 0 & \ldots & 0 & 0    \\
J_1 & B_2 & J_2 & \ldots & 0 & 0  \\
0 & J_2 & B_3 & \ldots & 0 & 0 \\
\vdots & \vdots & \vdots & \ddots & \vdots & \vdots \\
0 & 0 & 0 & \ldots & B_{N-1} & J_{N-1}    \\
0 & 0 & 0 & \ldots & J_{N-1} & B_N
\end{array}
\right),
$$
where we have neglected a term $-\sum_{n=1}^NB_n\identity$, as it evidently only affects the relative phase between the $\ket{0}$ and $\ket{1}$ components of $\ket{\psi}$ after the transfer, and this can always be corrected by using a unitary rotation.

\subsection{Entanglement Distribution}

While we will be discussing the ability to transfer an unknown state from one end of a chain to the other, it is worth being aware that this is equivalent to the task of entanglement distribution \cite{Kay:2004c}. Here, we imagine that two parties, Alice and Bob, want to share a maximally entangled state and they each control one spin of a coupled spin network.
\begin{lemma} The protocol of entanglement distribution, wherein Alice prepares a maximally entangled state, and puts half of it on the spin network, so that Bob can remove it at some later time $t_0$, such that they perfectly share a maximally entangled state, succeeds if and only if the spin network is a perfect state transfer network from Alice to Bob.
\end{lemma}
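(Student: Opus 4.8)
The plan is to prove the equivalence by exhibiting a direct correspondence between the two protocols, turning an arbitrary state transfer statement into an entanglement distribution statement and vice versa. First I would set up notation: let $U(t)=e^{-iHt}$ be the evolution generated by the chain Hamiltonian, and recall that, by the reduction to the single-excitation subspace discussed above, perfect state transfer from spin $1$ to spin $N$ at time $t_0$ is equivalent to the single statement $\braket{\underline{N}}{U(t_0)|\underline{1}} = e^{i\phi}$ for some phase $\phi$, i.e.\ the $(N,1)$ matrix element of $U_1(t_0)$ has unit modulus. The zero-excitation component $\ket{\underline 0}$ is an eigenstate and contributes only a global phase, so this single amplitude condition fully characterises a perfect state transfer network.

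Next I would analyse the entanglement distribution direction. Alice prepares, say, $\tfrac{1}{\sqrt 2}(\ket{0}_A\ket{0}_1 + \ket{1}_A\ket{1}_1)$ on her private qubit $A$ together with spin $1$ of the chain (the rest of the chain in $\ket{\underline 0}$), and lets the network evolve for time $t_0$. Since $H$ acts trivially on the ancilla and preserves excitation number, the $\ket{0}_A$ branch evolves by a pure phase while the $\ket{1}_A$ branch evolves the single excitation $\ket{\underline 1}$ into $U_1(t_0)\ket{\underline 1}$. For Bob to be able to extract a maximally entangled state shared with $A$ by acting only on spin $N$, the excitation in the $\ket{1}_A$ branch must, up to a phase Bob can correct locally, be entirely localised on spin $N$ --- otherwise the reduced state on $A$ together with spin $N$ is not pure, and no local operation of Bob can purify it or boost its entanglement (entanglement cannot increase under LOCC, and Bob's operations are local). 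Hence perfect entanglement distribution forces $|\braket{\underline N}{U_1(t_0)|\underline 1}| = 1$, which is exactly the perfect state transfer condition. Conversely, if the network is a perfect state transfer network, then this amplitude condition holds, the excitation arrives perfectly on spin $N$ with a known phase, Bob applies the corresponding local phase correction, and the state shared between $A$ and spin $N$ is maximally entangled; the same argument run in reverse (using the fact that an arbitrary $\ket{\psi}$ can be read off from half of a maximally entangled pair via teleportation-style reasoning, or simply by linearity in the single-excitation subspace) shows state transfer implies entanglement distribution.

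The key steps, in order, are: (i) reduce both protocols to statements about the single matrix element $\braket{\underline N}{U_1(t_0)|\underline 1}$; (ii) show entanglement distribution $\Rightarrow$ this element has unit modulus, using monotonicity of entanglement under Bob's local operations to rule out any leakage of the excitation onto other spins; (iii) show the unit-modulus condition $\Rightarrow$ successful entanglement distribution, by explicitly giving Bob's local phase correction; and (iv) close the loop by noting that state transfer and the unit-modulus condition are equivalent (already established above), and that entanglement distribution and state transfer are interconvertible by linearity. I expect step (ii) to be the main obstacle: one must argue carefully that \emph{any} residual amplitude on spins other than $N$ strictly degrades the entanglement available to $A$ and $B$, so that perfect sharing genuinely requires perfect localisation --- this is where the ``only if'' direction has real content, and it rests on the fact that Bob's admissible operations are confined to his own spin and thus cannot undo delocalisation of the excitation across the chain.
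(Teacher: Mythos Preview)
Your argument is correct and arrives at the same conclusion, but the ``only if'' direction is handled differently from the paper. You invoke entanglement monotonicity: if any amplitude of $U_1(t_0)\ket{\underline 1}$ lies off spin $N$, the joint state of the ancilla $A$ and spin $N$ is mixed, and Bob's local operations on $N$ cannot purify it or raise its entanglement, so perfect sharing forces $|\braket{\underline N}{U_1(t_0)|\underline 1}|=1$. The paper instead exploits the invariance $(U\otimes U^*)(\ket{00}+\ket{11})=\ket{00}+\ket{11}$ to set Alice's local correction to the identity without loss of generality, and then reads off directly that $\ket{0}_1\mapsto U_B^\dagger\ket{0}_B$ and $\ket{1}_1\mapsto U_B^\dagger\ket{1}_B$, so by linearity an arbitrary $\ket{\psi}$ is transferred up to a fixed local unitary at Bob's end.

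Your route is physically transparent (it makes explicit \emph{why} delocalisation is fatal) and is perfectly adequate in the context you have set up, but it leans on the excitation-preserving structure and the fixed initial state $\ket{\underline 0}$. The paper's algebraic argument makes no reference to excitation subspaces at all: it applies to an arbitrary network Hamiltonian and an arbitrary initial configuration $\ket{\Phi_{in}}$ of the chain, concluding that perfect entanglement distribution implies perfect transfer for at least that $\ket{\Phi_{in}}$. That extra generality is the main thing the paper's approach buys; conversely, your LOCC-monotonicity step is arguably the cleaner way to see what goes wrong when the transfer is imperfect, and would serve well if one later wanted to quantify the relationship between transfer fidelity and distributed entanglement.
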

\begin{proof}
Let us assume that Alice controls two spins, 0 and 1, where 1 is the first spin of the network. Bob controls spin $B$ on the network. Evidently, if Alice prepares a state
\begin{equation}
(\ket{00}_{0,1}+\ket{11}_{0,1})\ket{0}^{\otimes N-1}/\sqrt{2}	\label{eqn:start}
\end{equation}
and waits a time $t_0$, then if the network is a perfect state transfer network, it evolves to
$$
(\ket{00}_{0,B}+\ket{11}_{0,B})\ket{0}^{\otimes N-1}/\sqrt{2},
$$
and the protocol is successful. On the other hand, consider an arbitrary network, acting on the state $(\ket{00}_{0,1}+\ket{11}_{0,1})\ket{\Phi_{in}}/\sqrt{2}$. If the output state $\ket{\Psi}$ is to be maximally entangled between spins 0 and $B$, it must satisfy
$$
U_0\otimes U_B\ket{\Psi}=(\ket{00}_{0,B}+\ket{11}_{0,B})\ket{\Phi_{out}}/\sqrt{2}
$$
for some single-qubit unitaries $U_0$ and $U_B$.
However, since $(U\otimes U^*)(\ket{00}+\ket{11})=(\ket{00}+\ket{11})$, we can take $U_0=\identity$ without loss of generality. Therefore, $\ket{0}_1$ must have transferred to $U_B^\dagger\ket{0}$, and $\ket{1}_1$ must have transferred to $U_B^\dagger\ket{1}$. Thus, by linearity, we conclude that the system would have to transfer any initial state $(\alpha\ket{0}_1+\beta\ket{1}_1)\ket{\Phi_{in}}$ to the state $U_B^\dagger(\alpha\ket{0}_B+\beta\ket{1}_B)\ket{\Phi_{out}}$ i.e.~up to a correctable local unitary on Bob's spin, the network must perform as a perfect transfer network, at least for some initial configuration $\ket{\Phi_{in}}$ of the system.
\end{proof}

\subsection{The Symmetry Matching Condition} \label{sec:symmatch}

It is now our task to consider what properties the $\{J_i\}$ must satisfy in order to allow perfect transfer from $\ket{\underline{1}}\rightarrow\ket{\underline{N}}$.
\begin{lemma}
In order to transfer an excitation from $\ket{\underline{1}}\rightarrow\ket{\underline{N}}$ in a 1D $N$-qubit nearest-neighbor $XX$-coupled system with open boundary conditions, the Hamiltonian must be mirror symmetric, i.e.~$J_n^2=J_{N-n}^2$ and $B_n=B_{N+1-n}$ for all $n$. \label{lem:sym}
\end{lemma}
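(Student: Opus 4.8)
The plan is to pass to the single-excitation matrix $H_1$, combine an elementary bound from the spectral decomposition of the evolution with standard facts about Jacobi (tridiagonal) matrices, and close the argument with inverse spectral theory applied to the reversed chain. First I would make two reductions. Perfect transfer $\ket{\underline{1}}\to\ket{\underline{N}}$ at time $t_0$ means $|\bra{\underline{N}}e^{-iH_1t_0}\ket{\underline{1}}|=1$; if some $J_n=0$ then $e^{-iH_1t}$ leaves $\mathrm{span}\{\ket{\underline 1},\dots,\ket{\underline n}\}$ invariant, so that amplitude vanishes and transfer is impossible --- hence every $J_n\neq 0$. Conjugating $H_1$ by a diagonal $\pm1$ matrix $D$ with $D\ket{\underline 1}=\ket{\underline 1}$ flips the signs of the $J_n$ at will and only alters the final phase, so I may assume $J_n>0$; then $H_1$ is an irreducible Jacobi matrix, hence has simple spectrum $\lambda_1<\dots<\lambda_N$ and a real orthonormal eigenbasis $\ket{\lambda_k}$ whose end components $a_k=\braket{\underline 1}{\lambda_k}$, $b_k=\braket{\underline N}{\lambda_k}$ are all nonzero (because $\ket{\underline 1}$ and $\ket{\underline N}$ are cyclic vectors for $H_1$).

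Next I would extract the spectral consequence of perfect transfer. Expanding the evolution, $\bra{\underline{N}}e^{-iH_1t_0}\ket{\underline{1}}=\sum_k e^{-i\lambda_kt_0}a_kb_k$ with each $a_kb_k$ real, so
$$
1=\Big|\sum_k e^{-i\lambda_kt_0}a_kb_k\Big|\le\sum_k|a_k|\,|b_k|\le\Big(\sum_k a_k^2\Big)^{1/2}\Big(\sum_k b_k^2\Big)^{1/2}=1 .
$$
Every inequality is therefore an equality; equality in Cauchy--Schwarz forces $|a_k|=|b_k|$ for every $k$ (the unit vectors $(|a_k|)_k$ and $(|b_k|)_k$ must coincide). (Saturation of the triangle inequality additionally pins down the eigenvalues and phases, but that refinement is not needed here.)

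Finally I would invoke inverse spectral theory for the reversed chain. Let $P$ be the reversal permutation, $P\ket{\underline n}=\ket{\underline{N+1-n}}$, and set $\tilde H_1=PH_1P$, the Jacobi matrix with fields and couplings read in the opposite order, whose entries are $(\tilde H_1)_{n,n}=B_{N+1-n}$ and $(\tilde H_1)_{n,n+1}=J_{N-n}$. Its eigenvectors are $P\ket{\lambda_k}$, with the same eigenvalues $\lambda_k$ and first components $b_k$. Thus $H_1$ and $\tilde H_1$ have identical eigenvalues and, by the previous step, identical squared first eigenvector components $a_k^2=b_k^2$, i.e.\ the same spectral measure $\sum_k a_k^2\,\delta_{\lambda_k}$ with respect to $\ket{\underline 1}$. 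A finite Jacobi matrix with positive off-diagonal entries is uniquely determined by this data --- equivalently, the moments $\bra{\underline 1}H_1^m\ket{\underline 1}$ fix the recurrence coefficients $B_n$ and $J_n^2$ one at a time via the Lanczos/Stieltjes recursion, which is solvable precisely because no $J_n$ vanishes. Hence $H_1=\tilde H_1$, which reads off as $B_n=B_{N+1-n}$ and $J_n=J_{N-n}$; undoing the sign gauge introduced above yields $J_n^2=J_{N-n}^2$.

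The step I expect to require the most care is the last one --- the claim that the spectral measure at site $1$ determines the entire Jacobi matrix. I would either cite the standard inverse-eigenvalue-problem literature or include the short inductive ``peeling'' computation: $B_1=\bra{\underline 1}H_1\ket{\underline 1}$, then $J_1^2=\bra{\underline 1}H_1^2\ket{\underline 1}-B_1^2$, then $B_2$ from $\bra{\underline 1}H_1^3\ket{\underline 1}$, and so on, each solvable by the nonvanishing of the $J_n$. Everything else (the Jacobi-matrix facts on simple spectrum and cyclicity, and the Cauchy--Schwarz manipulation) is routine.
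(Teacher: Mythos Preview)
Your proof is correct and close in spirit to the paper's, but packaged differently. The paper obtains $|\alpha_n|^2=|\beta_n|^2$ directly from the component-wise identity $e^{-i\lambda_n t_0}\alpha_n=e^{i\phi}\beta_n$ (since perfect transfer means $e^{-iH_1t_0}\ket{\underline 1}$ \emph{equals} $e^{i\phi}\ket{\underline N}$ as vectors), whereas you reach the same conclusion via the triangle and Cauchy--Schwarz inequalities on the scalar transition amplitude; both work, the paper's route is marginally shorter. For the second half, the paper simply equates the moments $\bra{\underline 1}H_1^m\ket{\underline 1}=\bra{\underline N}H_1^m\ket{\underline N}$ and peels off $B_1=B_N$, $J_1^2=J_{N-1}^2$, \dots\ one at a time; you instead conjugate by the reversal $P$ to produce a second Jacobi matrix $\tilde H_1=PH_1P$ sharing the spectral measure of $H_1$ at $\ket{\underline 1}$, and then invoke the uniqueness theorem for Jacobi matrices --- which, as you yourself note, is precisely that peeling computation in disguise. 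Your framing buys a cleaner conceptual picture (it makes the role of inverse spectral theory explicit and handles the sign ambiguity in the $J_n$ more carefully than the paper does), while the paper's buys self-containment, needing no external citation.
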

\begin{proof}
Let the eigenvalues and eigenvectors of $H_1$ be $\lambda_n$ and $\ket{\lambda_n}$ respectively. The initial and final target states are given in terms of the eigenvectors by
\begin{eqnarray}
\ket{\underline{1}}&=&\sum_{n=1}^N\alpha_n\ket{\lambda_n}   \\
\ket{\underline{N}}&=&\sum_{n=1}^N\beta_n\ket{\lambda_n}.
\end{eqnarray}
Since the target is
$$
e^{-iH_1t_0}\ket{\underline{1}}=e^{i\phi}\ket{\underline{N}}
$$
for some phase $\phi$ and time $t_0$, it must be true that
$$
e^{-i\lambda_nt_0}\alpha_n=e^{i\phi}\beta_n    \qquad\forall n.
$$
In particular, this reveals that $|\alpha_n|^2=|\beta_n|^2$, and by raising $H_1$ to an integer power, $m$, we can relate
$$
\bra{\underline{1}}H_1^m\ket{\underline{1}}=\sum_{n=1}^N\lambda_n^m|\alpha_n|^2=\bra{\underline{N}}H_1^m\ket{\underline{N}}.
$$
For $m=1$, this gives that $B_1=B_N$. For $m=2$, we find that $B_1^2+J_1^2=B_N^2+J_{N-1}^2$, and thus $J_1^2=J_{N-1}^2$. Each time that $m$ is increased by 1, one new variable is introduced on each side of the equation (either a $B_n$ or $J_n^2$ depending on the parity of $m$), and one finds that they must be equal, giving the required symmetry properties.
\end{proof}
This proof can be extended to more general coupling terms of the form
$$
\text{Re}(J_n)(X_nX_{n+1}+Y_nY_{n+1})+\text{Im}(J_n)(X_nY_{n+1}-Y_nX_{n+1}),
$$
and one finds the condition $|J_n|^2=|J_{N-n}|^2$. The effect of complex coupling coefficients is just to alter the ultimate arrival phase of the $\ket{1}$ component of $\ket{\psi}$ relative to the $\ket{0}$ \cite{Kay:2005b}, so it suffices to restrict to considering real, positive, $J_n$.

Given that $B_n=B_{N+1-n}$ and $J_n=J_{N-n}$, the Hamiltonian $H_1$ commutes with the symmetry operator
$$
S=\sum_{n=1}^N\ket{\underline{n}}\bra{\underline{N+1-n}}.
$$
Thus, $H_1$ further subdivides into symmetric and antisymmetric subspaces, with eigenvectors $\ket{\lambda_n^s}$ and $\ket{\lambda_n^a}$ respectively.

\begin{lemma}
A necessary and sufficient condition for perfect state transfer in a symmetric chain is that there should exist a time $t_0$ and a phase $\phi$ such that $e^{-it_0\lambda_n^s}=e^{i\phi}$ and $e^{-it_0\lambda_n^a}=-e^{i\phi}$ for all eigenvectors $\braket{\lambda_n}{\underline{1}}\neq0$ \cite{shi,Kay:2004c}.
\label{lemma:symmetry_match}
\end{lemma}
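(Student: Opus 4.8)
The plan is to avoid diagonalising $H_1$ as a whole and instead work directly with the mirror symmetry $S$, which commutes with $H_1$ and squares to the identity, so that the single-excitation space splits orthogonally into its $+1$ (symmetric) and $-1$ (antisymmetric) eigenspaces. First I would note that $\ket{\underline{N}}=S\ket{\underline{1}}$, and decompose $\ket{\underline{1}}=\ket{v_+}+\ket{v_-}$ into its symmetric and antisymmetric components $\ket{v_\pm}=\half(\ket{\underline{1}}\pm\ket{\underline{N}})$. Both are nonzero for $N\geq2$, and one reads off $\ket{\underline{N}}=\ket{v_+}-\ket{v_-}$.

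Next I would use that $[H_1,S]=0$ implies $e^{-iH_1t_0}$ preserves each of the two subspaces. Writing the perfect-transfer requirement $e^{-iH_1t_0}\ket{\underline{1}}=e^{i\phi}\ket{\underline{N}}$ as $e^{-iH_1t_0}\ket{v_+}+e^{-iH_1t_0}\ket{v_-}=e^{i\phi}\ket{v_+}-e^{i\phi}\ket{v_-}$ and matching the symmetric and antisymmetric parts of the two sides — which is legitimate precisely because those subspaces are orthogonal — shows that perfect state transfer holds if and only if the two \emph{decoupled} conditions $e^{-iH_1t_0}\ket{v_+}=e^{i\phi}\ket{v_+}$ and $e^{-iH_1t_0}\ket{v_-}=-e^{i\phi}\ket{v_-}$ both hold. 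Conversely, if both hold then adding them recovers the transfer equation, so this equivalence already carries the necessity and sufficiency directions together.

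To finish I would expand $\ket{v_+}$ in an orthonormal eigenbasis $\{\ket{\lambda_n^s}\}$ of $H_1$ on the symmetric subspace and $\ket{v_-}$ in $\{\ket{\lambda_n^a}\}$ on the antisymmetric one. The first operator equation then reads $c_n\,e^{-i\lambda_n^st_0}=c_n\,e^{i\phi}$ for every coefficient $c_n=\braket{\lambda_n^s}{v_+}$, i.e.\ $e^{-i\lambda_n^st_0}=e^{i\phi}$ whenever $\braket{\lambda_n^s}{v_+}\neq0$, and similarly $e^{-i\lambda_n^at_0}=-e^{i\phi}$ whenever $\braket{\lambda_n^a}{v_-}\neq0$. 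Finally I would observe that $\braket{\lambda_n^s}{v_+}=\braket{\lambda_n^s}{\underline{1}}$ (since $\ket{\lambda_n^s}$ is orthogonal to the antisymmetric part $\ket{v_-}$) and likewise $\braket{\lambda_n^a}{v_-}=\braket{\lambda_n^a}{\underline{1}}$, so the set of eigenvectors constrained by the two conditions is exactly $\{\ket{\lambda_n}:\braket{\lambda_n}{\underline{1}}\neq0\}$, which is the claimed statement.

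I do not expect a genuine obstacle here; the content is essentially the observation that the mirror symmetry diagonalises the problem into two independent phase-alignment requirements. The only care needed is bookkeeping: one should choose the eigenbasis compatibly with $S$ so that each $\ket{\lambda_n}$ is purely symmetric or purely antisymmetric — automatic in this setting since a Jacobi matrix with nonvanishing off-diagonal entries has simple spectrum, so no symmetric eigenvalue can coincide with an antisymmetric one — and one should be explicit that the component-matching step in the second paragraph is an equivalence, not merely an implication, because the two subspaces are orthogonal. With those points noted, the "if and only if" is immediate.
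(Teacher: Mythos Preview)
Your proposal is correct and follows essentially the same approach as the paper: decompose $\ket{\underline{1}}$ into its symmetric and antisymmetric parts with respect to $S$, use $[H_1,S]=0$ to evolve each piece independently, and compare with $\ket{\underline{N}}=S\ket{\underline{1}}$ to read off the phase conditions on the eigenvalues with nonzero overlap. The paper does this in one line by expanding $\ket{\underline{1}}$ directly in the joint eigenbasis $\{\ket{\lambda_n^s},\ket{\lambda_n^a}\}$ rather than first passing through the intermediate vectors $\ket{v_\pm}$, but the content is the same; your extra remarks on simplicity of the spectrum and on why the component-matching is an equivalence are careful bookkeeping that the paper leaves implicit.
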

\begin{proof}
Consider the decomposition of $\ket{\underline{1}}$ in terms of eigenvectors, and evolve it with the Hamiltonian.
\begin{eqnarray}
e^{-iH_1t_0}\ket{\underline{1}}&=&\sum_ne^{-i\lambda_n^st_0}\alpha_n^s\ket{\lambda_n^s}+e^{-i\lambda_n^at_0}\alpha_n^a\ket{\lambda_n^a}   \nonumber\\
&=&e^{i\phi}\left(\sum_n\alpha_n^s\ket{\lambda_n^s}-\alpha_n^a\ket{\lambda_n^a}\right)=e^{i\phi}S\ket{\underline{1}}. \nonumber
\end{eqnarray}
This immediately reveals the required conditions for any $\alpha_n\neq0$. It follows that all initial states $\ket{\underline{n}}$ transfer to their mirror opposite $S\ket{\underline{n}}=\ket{\underline{N+1-n}}$. Furthermore, if state transfer occurs in time $t_0$, then by symmetry there is a revival of the state on the input spin at times $2nt_0$ for integer $n$ and perfect state transfer at all times $(2n+1)t_0$.
\end{proof}

This represents a remarkably simple condition on the eigenvalues of $H_1$; one which is readily tested, and widely applicable (with slight modification, it need not be limited to a nearest-neighbor Hamiltonian, nor to an excitation preserving Hamiltonian). Within the single excitation subspace of a nearest-neighbor model the $\alpha_n\neq0$ condition can be relaxed by observing that this never happens. For an eigenvector $\ket{\lambda_n}=\sum_m\lambda_{n,m}\ket{\underline{m}}$, the $\lambda_{n,m}$ obey the recursion relation
$$
J_m\lambda_{n,m+1}=(\lambda_n-B_n)\lambda_{n,m}-J_{n-1}\lambda_{n,m-1},
$$
so if $\lambda_{n,m-1}=\lambda_{n,m}=0$, then $\lambda_{n,m+1}=0$. The special case of $m=1$ shows that if $\lambda_{n,1}=\alpha_n=0$, then $\lambda_{n,2}=0$, and so it follows by induction that the entire eigenvector is 0 i.e.~there is no eigenvector with $\alpha_n=0$. Alternatively, we would find that somewhere in the chain, there is a $J_m=0$, but this automatically forbids state transfer. The only remaining problem is how to identify which eigenvalues belong to the set $\{\lambda_n^s\}$, and which to $\{\lambda_n^a\}$.

\begin{lemma}
For tridiagonal matrices, with negative off-diagonal matrix elements and eigenvalues $\lambda_n$ ($n=1\ldots N$), the number of sign changes in the eigenvectors $\ket{\lambda_n}$ is $n-1$. The eigenvectors are assumed to be ordered such that $\lambda_n<\lambda_{n+1}$.
\end{lemma}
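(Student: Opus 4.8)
The plan is to translate the claim about sign changes of eigenvectors into the classical Sturm oscillation property of the nested principal minors of $H_1$. Write $H_1^{(m)}$ for the leading $m\times m$ block of $H_1$ and set $p_m(\lambda)=\det(\lambda\identity-H_1^{(m)})$ with $p_0\equiv 1$; expanding the determinant along the last row gives the three-term recurrence $p_m(\lambda)=(\lambda-B_m)p_{m-1}(\lambda)-J_{m-1}^2p_{m-2}(\lambda)$. The components $\lambda_{n,m}$ of an eigenvector $\ket{\lambda_n}$ obey the very similar recursion quoted above, with boundary value $\lambda_{n,0}=0$ and $\lambda_{n,1}\neq 0$ (already established). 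Solving it outward and comparing with the minor recurrence, one shows by induction on $m$ that, after the normalisation $\lambda_{n,1}=1$, one has $\lambda_{n,m}=p_{m-1}(\lambda_n)\big/\prod_{k=1}^{m-1}J_k$, and that $\lambda_n$ being an eigenvalue of $H_1=H_1^{(N)}$ is precisely the statement $p_N(\lambda_n)=0$. Because every off-diagonal element is negative, $\prod_{k=1}^{m-1}J_k$ has sign $(-1)^{m-1}$, so $\mathrm{sign}(\lambda_{n,m})=(-1)^{m-1}\,\mathrm{sign}\,p_{m-1}(\lambda_n)$; a one-line check then shows that $\lambda_{n,m}$ and $\lambda_{n,m+1}$ differ in sign exactly when $p_{m-1}(\lambda_n)$ and $p_m(\lambda_n)$ \emph{agree} in sign. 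Hence the number of sign changes in $\ket{\lambda_n}$ equals the number of consecutive sign \emph{agreements} in the string $\big(p_0(\lambda_n),p_1(\lambda_n),\dots,p_{N-1}(\lambda_n)\big)$.

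Second, I would record the Sturm-sequence structure of $\{p_m\}$: each $p_m$ has degree $m$ with positive leading coefficient, its spectrum is simple (an irreducible tridiagonal matrix has its eigenvector fixed by the recurrence up to normalisation, so geometric, hence algebraic, multiplicities are $1$), and, by Cauchy interlacing for the nested Hermitian matrices $H_1^{(m-1)}\subset H_1^{(m)}$ together with the fact that no two consecutive $p_m$ share a root, the roots of $p_{m-1}$ strictly interlace those of $p_m$. (The no-common-root fact is the same cascading argument the text uses for eigenvectors: $p_{m-1}(\mu)=p_m(\mu)=0$ forces $p_{m-2}(\mu)=0$ and eventually $p_0(\mu)=1=0$; irreducibility is exactly the non-vanishing of the $J_k$, which the text already notes is necessary for transfer.) Moreover, whenever an interior $p_m(\mu)=0$ the recurrence gives $p_{m-1}(\mu)$ and $p_{m+1}(\mu)$ opposite signs, so the usual convention — delete the zero entries, then count changes — is unambiguous, and the sign-change count of $\big(p_0(\mu),\dots,p_{N-1}(\mu)\big)$ is locally constant in $\mu$ as long as the end entries $p_0$ and $p_{N-1}$ stay nonzero.

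Finally I would do the counting at $\mu=\lambda_n$. Take $\mu$ slightly below $\lambda_n$, in the gap $(\lambda_{n-1},\lambda_n)$: by the standard Sturm count the full string $\big(p_0(\mu),\dots,p_N(\mu)\big)$ then has as many sign changes as $p_N$ has roots above $\mu$, namely $N-n+1$. From the strict interlacing, for $\mu$ close enough to $\lambda_n$ the sign of $p_{N-1}(\mu)$ is fixed while $p_N(\mu)$ is about to cross zero at $\lambda_n$ with the opposite sign, so the last pair $(p_{N-1},p_N)$ accounts for exactly one of those sign changes and the remaining $N-n$ sit inside $\big(p_0(\mu),\dots,p_{N-1}(\mu)\big)$. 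Since $p_{N-1}(\lambda_n)\neq0$ (its roots are disjoint from those of $p_N$) and $p_0\equiv 1$, the truncated count is unchanged in the limit $\mu\to\lambda_n$, so it equals $N-n$ at $\mu=\lambda_n$; therefore the number of sign agreements there is $(N-1)-(N-n)=n-1$, which by the first paragraph is the number of sign changes in $\ket{\lambda_n}$.

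The genuine content, and the only delicate point, is this last Sturm-oscillation step — keeping the bookkeeping of signs and of the admissible interior zeros straight across $\mu=\lambda_n$ — together with the observation that the orientation of the conclusion ($n-1$ sign changes for the $n$-th eigenvalue taken in \emph{increasing} order, rather than $N-n$) is forced by the hypothesis that the off-diagonal couplings are negative; with positive couplings the roles of the smallest and largest eigenvalues would simply be exchanged.
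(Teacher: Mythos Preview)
Your argument is correct: the identification $\lambda_{n,m}=p_{m-1}(\lambda_n)/\prod_{k=1}^{m-1}J_k$, the parity flip induced by the negative $J_k$, and the Sturm count all check out, including the bookkeeping at the endpoint $\mu=\lambda_n$ (where $p_{N-1}\neq0$ by the no-common-root argument, and the interior zeros of $p_m$ are harmless because $p_{m-1}$ and $p_{m+1}$ then have opposite signs).

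As for comparison: the paper does not give its own proof of this lemma at all --- it simply cites Gladwell's monograph on inverse problems in vibration. What you have written is essentially the classical Sturm-sequence/oscillation argument that one finds there (and in most treatments of Jacobi matrices), so you have reconstructed the standard proof rather than found an alternative. Your closing remark about the orientation being fixed by the sign of the off-diagonal entries is exactly the point the paper makes in the paragraph following the lemma, where it notes that for the physical $H_1$ with positive $J_n$ the ordering is reversed and the top eigenvector is the nodeless one.
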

This is proven in \cite{gladwell}. As a consequence, for symmetric tridiagonal matrices, the symmetry of the eigenvectors alternates, as it is determined by the number of sign changes (an odd number means that the eigenvector is antisymmetric). Our matrix $H_1$ is a tridiagonal matrix, but with positive coefficients $J$. As such, the ordering is reversed, with the maximum eigenvector being the one with no sign changes. Consider $\{\lambda_n\}$ to be an ordered set of eigenvalues, $\lambda_n<\lambda_{n+1}$. The necessary and sufficient conditions for state transfer in a symmetric chain then become that
\begin{equation}
\lambda_n-\lambda_{n-1}=(2m_n+1)\pi/t_0, \label{eqn:st_cond}
\end{equation}
where $t_0$ is the state transfer time, and $m_n$ is a positive integer (which can vary with $n$). This means that the ratio of pairs of differences of eigenvalues must be rational.




\subsection{Inverse Eigenvalue Problems} \label{sec:iep}

In Sec.~\ref{sec:symmatch}, we saw that in order to test if perfect state transfer can be achieved in a 1D nearest-neighbor coupled chain, one only needs to test a symmetry condition, and then consider the eigenvalues of an $N\times N$ matrix. Rather than simply test such a condition, we would like to know how to fix the coupling strengths $\{J_n\}$ and $\{B_n\}$. Fortunately, this is an extremely well studied problem. We can select any set of (non-degenerate) eigenvalues $\{\lambda_n\}$ compatible with Eqn.~(\ref{eqn:st_cond}), and solve an Inverse Eigenvalue Problem \cite{gladwell} to find, in time $\text{poly}(N)$, the corresponding coupling strengths. Every possible choice of eigenvalues has a unique solution with positive (non-zero), symmetric, $J_n$.

Specifying a chain by its eigenvalues is mathematically elegant, but why should one select a particular set of eigenvalues? In the following subsections, we shall consider a specific set of eigenvalues, and show that they yield a family of chains which are optimal for a range of properties, such as transfer speed. For the remainder of this subsection, however, we shall outline a technique of Karbach and Stolze \cite{transfer_comment} that motivates how to select the eigenvalues so that the final chain has specific properties.

The specific target that we have in mind is to create state transfer chains with as little spatial variation of the couplings as possible. One might hope that this would reduce manufacturing requirements for such chains\footnote{In practice, we will end up trading the degree of spatial variation for the required accuracy of the eigenvalues, and thus coupling strengths.}. As such, we start from a situation where all the coupling strengths are equal, $J_n=1$, and $B_n=0$, which would be our ideal situation, except that perfect transfer is impossible. The eigenvalues are readily found:
$$
\lambda_n=-2\cos\left(\frac{n\pi}{N+1}\right).
$$
Let us denote $\min_n(\lambda_n-\lambda_{n-1})$, the minimum spacing in the spectrum, by $\delta$. We now select a quantity $\pi/t_0\ll\delta$, which will determine the state transfer time. In the present case, this means that $t_0\sim N^2$. Each of the eigenvalues can then be changed by no more than $\pi/t_0$ such that the perfect state transfer condition of Eqn.~(\ref{eqn:st_cond}) is satisfied. Subsequently, an Inverse Eigenvalue Problem can be solved to find the coupling strengths of a chain with that spectrum. Since the eigenvalues were only slightly perturbed, the coupling strengths are only slightly perturbed, and thus the resultant chain is close to being uniformly coupled, and close to having no magnetic field. One can improve on this by imposing on the choice of spectrum that $\lambda_n=-\lambda_{N+1-n}$, so the magnetic fields necessarily come out to zero.

\begin{lemma}
By selecting the eigenvalues of a symmetric tridiagonal matrix to satisfy $\lambda_n=-\lambda_{N+1-n}$ for all $n$, the result of the inverse eigenvalue problem automatically satisfies $B_m=0$ for all $m$. \label{lemma:zero}
\end{lemma}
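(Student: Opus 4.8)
The plan is to combine the uniqueness of the inverse eigenvalue problem (recalled in Sec.~\ref{sec:iep} — a mirror-symmetric tridiagonal matrix with positive off-diagonal entries is fixed by its spectrum) with a simple sign-flipping similarity. Write $H_1$ for the matrix produced by the inverse eigenvalue problem, with diagonal entries $B_m$ and strictly positive off-diagonal entries $J_m$, and let $D=\text{diag}(1,-1,1,-1,\ldots)$, i.e.\ $D_{mm}=(-1)^{m+1}$. The first step is to read off the entries of $D(-H_1)D$: conjugation by $D$ multiplies the $(m,m{+}1)$ entry by $D_{mm}D_{m+1,m+1}=-1$, so the off-diagonal entries $-J_m$ of $-H_1$ turn back into $+J_m$, while the diagonal entries $-B_m$ are unchanged (since $D_{mm}^2=1$). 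Thus $D(-H_1)D$ is again tridiagonal with strictly positive off-diagonal couplings $J_m$, and diagonal $(-B_m)$.

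Next I would verify that this conjugated matrix still lies in the class where the uniqueness theorem applies, namely that it is mirror-symmetric. Since $H_1$ is persymmetric we have $SH_1S=H_1$ for the reversal matrix $S=\sum_n\ket{\underline n}\bra{\underline{N+1-n}}$, and a one-line computation gives $SDS=(-1)^{N-1}D$. Hence $S\,[D(-H_1)D]\,S=(SDS)(S(-H_1)S)(SDS)=(-1)^{N-1}D\,(-H_1)\,(-1)^{N-1}D=D(-H_1)D$, so $D(-H_1)D$ is persymmetric. Its spectrum is that of $H_1$ negated, i.e.\ $\{-\lambda_n\}_{n=1}^N$; by the hypothesis $\lambda_n=-\lambda_{N+1-n}$ this set coincides with $\{\lambda_n\}_{n=1}^N$, the spectrum of $H_1$ itself.

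Now both $H_1$ and $D(-H_1)D$ are mirror-symmetric tridiagonal matrices with strictly positive off-diagonal entries and the same spectrum, so by the uniqueness of the solution to the inverse eigenvalue problem they must be equal. Comparing diagonal entries of $H_1=D(-H_1)D$ gives $B_m=-B_m$, hence $B_m=0$ for every $m$, which is the claim. The only step that genuinely needs care is the bookkeeping in the second paragraph: one must confirm that both persymmetry \emph{and} positivity of the off-diagonal entries survive the sign conjugation, since otherwise $D(-H_1)D$ would only share the spectrum of $H_1$ without being forced to equal it.

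As a remark, there is an alternative route that avoids the similarity transformation: expand the odd moments $\Tr(H_1^{2k+1})$ as power sums $\sum_n\lambda_n^{2k+1}$, which all vanish when the spectrum is symmetric about the origin, and then argue inductively — using persymmetry to reduce the number of free parameters at each order — that every $B_m$ must be zero. This works but is combinatorially messier (the walk expansion of $\Tr(H_1^{2k+1})$ proliferates quickly), so I would present the conjugation argument instead.
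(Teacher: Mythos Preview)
Your proof is correct and takes a genuinely different route from the paper. The paper argues via Hochstadt's formula (Eqn.~(\ref{eqn:hoch})) for $|\alpha_n|^2=|\braket{\underline{1}}{\lambda_n}|^2$: from $\lambda_n=-\lambda_{N+1-n}$ it deduces $B'(\lambda_n)=(-1)^{N-1}B'(\lambda_{N+1-n})$ and hence $|\alpha_n|^2=|\alpha_{N+1-n}|^2$, which forces all odd moments $\bra{\underline{1}}H_1^{2m+1}\ket{\underline{1}}=\sum_n\lambda_n^{2m+1}|\alpha_n|^2$ to vanish; an inductive expansion of these moments in the $J_n,B_n$ then kills each $B_m$ in turn. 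Your sign-conjugation argument bypasses all of this: once you check that $D(-H_1)D$ is again persymmetric with strictly positive off-diagonals and the same spectrum, the uniqueness statement from Sec.~\ref{sec:iep} does the rest. Your route is shorter and more conceptual; the paper's route has the compensating virtue of introducing Hochstadt's formula and the iterative moment structure, both of which are reused later (notably in Lemma~\ref{lemma:rate}). Amusingly, the ``alternative route'' you sketch at the end is essentially the paper's strategy, except that the paper uses $\bra{\underline{1}}H_1^{2m+1}\ket{\underline{1}}$ rather than $\Tr(H_1^{2m+1})$ --- a choice that makes the induction cleaner, since exactly one new $B_m$ enters at each order via the single term $J_1^2\cdots J_{m-1}^2B_m$.
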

\begin{proof}
We will prove this by giving the essence of the inverse eigenvalue algorithm \cite{hochstadt}\footnote{While containing the essence of the solution to the inverse eigenvalue, this particular method is not the most numerically stable.}. Consider the function
$$
B'(\lambda_n)=\prod_{m=1\neq n}^N(\lambda_n-\lambda_m),
$$
which is the derivative of the function $B(\lambda)=\prod_{m=1}^N(\lambda-\lambda_m)$, the characteristic polynomial of the system, evaluated at $\lambda_n$.
This can be expanded in the case of $\lambda_n=-\lambda_{N+1-n}$ such that
\begin{eqnarray}
B'(\lambda_n)&=&-2\lambda_n\left(\prod_{m=1\neq n}^{N/2}(\lambda_m-\lambda_n)(-\lambda_m-\lambda_n)\right)  \nonumber\\
B'(\lambda_{N+1-n})&=&2\lambda_n\left(\prod_{m=1\neq n}^{N/2}(\lambda_m+\lambda_n)(-\lambda_m+\lambda_n)\right)  \nonumber
\end{eqnarray}
for even $N$. Odd $N$ follows similarly, so that, for any $N$, $B'(\lambda_n)=(-1)^{N-1}B'(\lambda_{N+1-n})$.
In \cite{hochstadt}, it is proven that for a symmetric tridiagonal matrix,
\begin{equation}
|\alpha_n|^2=\frac{1}{(-1)^nB'(\lambda_n)\sum_{m=1}^N\frac{1}{(-1)^nB'(\lambda_m)}}.  \label{eqn:hoch}
\end{equation}
Taking the ratio of these terms,
$$
\frac{|\alpha_n|^2}{|\alpha_{N+1-n}|^2}=\frac{(-1)^{N-1}B'(\lambda_{N+1-n})}{B'(\lambda_n)}=1.
$$
This imposes that
$$
\bra{\underline{1}}H_1^{2m+1}\ket{\underline{1}}=\sum_{n=1}^N\lambda_n^{2m+1}|\alpha_n|^2=0
$$
for every integer $m$. We must now compare this to an expansion of $\bra{\underline{1}}H_1^{2m+1}\ket{\underline{1}}$ in terms of $J_n$ and $B_n$. Every term in this sum must contain an odd number of $B_n$ with $n\leq m$, and only one term $J_1^2\ldots J_{m-1}^2B_m$ contains $B_m$. Provided $J_n\neq 0$, we can therefore solve iteratively to find that all $B_n=0$.
\end{proof}

\subsection{An Analytic Solution} \label{sec:anal}

Rather than relying on numerical solutions, it is often beneficial to have special cases enumerated. A number of coupling schemes with integer spectra have been found \cite{Christandl,lambrop,Christandl:2004a,shi,xi}, although we will only concentrate on the simplest of these, which is based on the $J_x$ rotation matrix of a spin $\half(N-1)$ particle. This can be written as a tridiagonal matrix
\begin{equation}
J_x=\half\sum_{n=1}^N\sqrt{n(N-n)}(\ket{\underline{n}}\bra{\underline{n+1}}+\ket{\underline{n+1}}\bra{\underline{n}}),  \label{eqn:JX}
\end{equation}
which is exactly the form of $H_1$ ($J_n=\half\sqrt{n(N-n)}$, $B_n=0$). You may recall that the eigenvectors must be the $\ket{M_x}$ states, with eigenvalues $M_x=\half(N-1),\half(N-1)-1\ldots-\half(N-1)$. Thus, the eigenvalues all satisfy $\lambda_n-\lambda_{n-1}=1$, and since $[J_x,S]=0$, the conditions for perfect transfer are satisfied, with $t_0=\pi$. To see this in a more natural way, consider the evolution of a Hamiltonian
$$
H_X=\half\sum_{n=1}^{N-1}X_n
$$
acting on the initial state $\ket{0}^{\otimes N-1}$. Clearly, in a time $\pi$, the state evolves to $\ket{1}^{\otimes N}$. Now let us introduce the states
$$
\ket{\psi_n}=\frac{1}{\sqrt{\binom{N-1}{n}}}\sum_{\stackrel{x\in\{0,1\}^{N-1}}{w_x=n}}\ket{x},
$$
where $w_x$ is the Hamming weight of the bit string $x$. One simply has to verify that
$$
H\ket{\psi_n}=J_{n-1}\ket{\psi_{n-1}}+J_n\ket{\psi_{n+1}}
$$
to see that it takes on the claimed diagonal form of $J_x$ on the subspace of $\{\ket{\psi_n}\}$, and that the claimed perfect transfer is achieved. Alternative proofs can be found in \cite{shore, Christandl,Kay:2004c,Christandl:2004a,lambrop}.

\subsection{Optimality of Solutions} \label{sec:optimal}

Evidently, arbitrarily fast transfer can be achieved by just continuing to scale up the coupling strengths. A transformation $J_n\rightarrow \kappa J_n$ for all $n$ yields $t_0\rightarrow t_0/\kappa$. However, in practical settings, we are typically constrained by properties such as the maximum coupling strength, or possibly the maximum eigenvalue. In such situations, which is the best set of eigenvalues to choose? We will now show that the optimal choice with respect to a number of conditions is that of Eqn.~(\ref{eqn:JX}), as proved in \cite{Kay:2005e,yung:06}.

\subsubsection{Transfer Time}

We would like to optimize the transfer time $t_0$ subject to some constraint. The most easily applied constraint is that of the maximum eigenvalue. Alternatively, this can be viewed as, for a fixed $t_0$, how can we minimize the maximum eigenvalue? The minimum spacing between eigenvalues is $\pi/t_0$, and this is achieved by the analytic solution previously given. Thus, the difference $\lambda_{\max}-\lambda_{\min}$ is a minimum. While bounding this range is useful, it is not the most physical parameter to optimize with respect to. Rather, it would be preferable to find, for a given maximum coupling strength $J_{\max}$, what the minimum transfer time is. This problem was first considered in \cite{yung:06}. We will concentrate on the case of even $N$, as the odd $N$ case is somewhat more involved.

We define $\delta_k=\lambda_{k+1}-\lambda_{k}$ to be the difference in eigenvalues, and $\delta=\min_k\delta_k$. Note that $\delta\geq \pi/t_0$, and we wish to consider all sets of models with the same $t_0$. The Hamiltonian divides into symmetric and antisymmetric subspaces, with the eigenvalues satisfying
$$
\sum_n(\lambda_n^s-\lambda_n^a)=\Tr(SH_1)=2J_{N/2}.
$$
For any particular coupling scheme, one has
$$
J_{\max}\geq J_{N/2}\geq\smallfrac{4} N\delta\geq\smallfrac{4} N\pi/t_0.
$$
Moreover, for the analytic scheme, these inequalities completely collapse, $J_{\max}=\smallfrac{4} N/t_0$, and thus, for a fixed $t_0$, this represents the scheme with the smallest $J_{\text{max}}$.

\subsubsection{Timing Errors}

\begin{figure}[!tb]
\begin{center}
\includegraphics[width=0.45\textwidth]{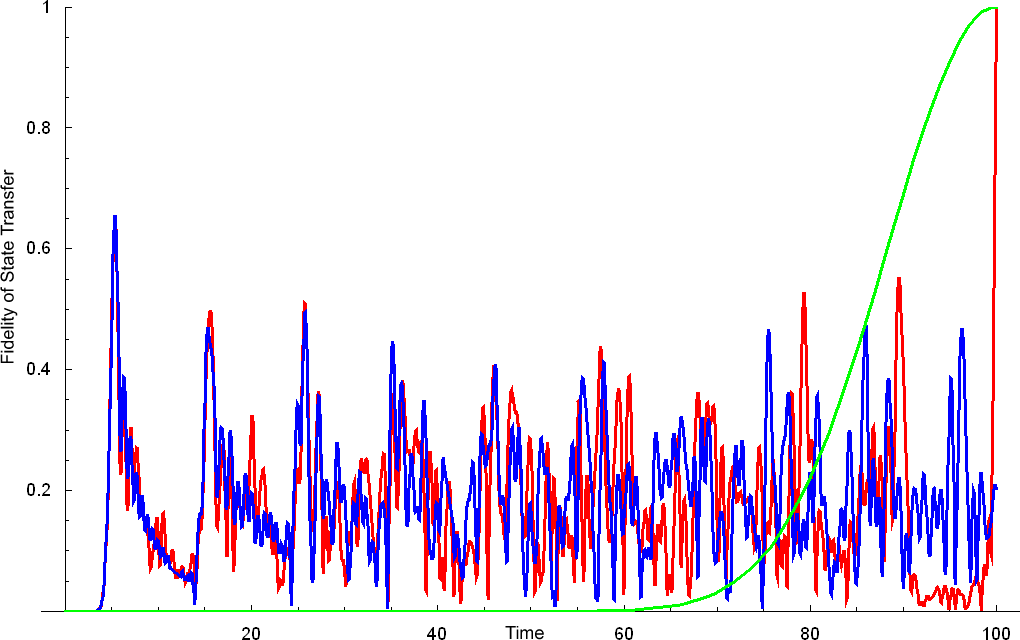}
\caption{A comparison of the transfer fidelity $\bra{\psi_{\text{in}}}\rho_{\text{out}}\ket{\psi_{\text{in}}}$ of the uniformly coupled chain of 31 qubits (blue), the almost uniformly coupled chain giving perfect state transfer (red), and the analytic solution (green).} \label{fig:errors}
\end{center}
\end{figure}

One of the underlying assumptions to the process of state transfer has been that at $t=0$, a quantum state can be placed on spin 1, and at $t=t_0$, it can be removed from spin $N$, instantaneously. In practice, this will not be the case, so it would be useful to know how to design the chain so that for a given tolerable error $\varepsilon$, the possible time window for implementing the transfer of the state is as long as possible. Critical to this consideration are the quantities
\begin{equation}
\gamma_n(t)=\bra{\underline{n}}e^{-iH_1t}\ket{\underline{1}},   \label{eqn:gamma}
\end{equation}
and, more specifically, $\gamma_1(t)$. Given the symmetry of the chain, optimizing this function will also optimize the equivalent function for the arrival of the state at the other end of the chain. Let us expand $\gamma_1(t)$ in terms of the eigenvectors $\{\ket{\lambda_n}\}$ of $H_1$, $\gamma_1(t)=\sum_{n=1}^N|\alpha_n|^2e^{-i\lambda_nt}$. Of course, $\gamma(0)=1$, and this must be a maximum. Thus, how quickly it changes is governed by
$$
\left.\frac{d^2\gamma_1}{dt^2}\right|_{t=0}=-\sum_{n=1}^N|\alpha_n|^2\lambda_n^2=-\bra{\underline{1}}H_1^2\ket{\underline{1}},
$$
and minimizing the rate of departure of the excitation equates to minimizing $J_1^2+B_1^2$ for a given $t_0$. It thus seems likely that our analytic solution, where $B_1=0$ and $J_1$ is only $1/\sqrt{N}$ of the maximum coupling strength, will be extremely robust. For example, Fig.~\ref{fig:errors} compares this chain with another following the technique of Sec.~\ref{sec:iep}. Further confirmation arises from a closer examination of the expression $\sum_{n=1}^N|\alpha_n|^2\lambda_n^2$. The analytic solution has the smallest possible maximum eigenvalue. Furthermore, upon solving for the eigenvectors \cite{Christandl:2004a}, one finds that the $\alpha_n$ corresponding to the largest $|\lambda_n|$ are exponentially suppressed in $N$, and only those with essentially no contribution to $\lambda_n^2$ have a contribution in $|\alpha_n|^2$, which thus shows that the analytic solution must be very close to optimal, given that none of the $\alpha_n=0$.

\subsection{Beyond Nearest-Neighbour Couplings}

There is no special reason to restrict to nearest-neighbor couplings in the Hamiltonian $H$; in practice there are likely to be residual couplings with strengths that decrease with distance. The introduction of $XX+YY$ terms between non-nearest-neighbors does not change the fact that the Hamiltonian decomposes into distinct excitation subspaces, and the symmetry matching condition of Lemma \ref{lemma:symmetry_match} still holds. The Inverse Eigenvalue Problem for this wider class of cases is less well studied, and the existence of solutions is no longer guaranteed -- the alternating symmetry property of eigenvectors is not guaranteed to hold, which leads to problems with selecting the correct eigenvalues. Nevertheless, by using the technique of Sec.~\ref{sec:iep} to start from a solution where eigenvalues can be calculated, a near-by solution can be expected to exist, and then an efficient algorithm has been formulated to find the coupling parameters \cite{Kay:2005e}, as will now be described.

We are given a Hamiltonian $H({\vec{r}})$ which satisfies
$$
\left[H({\vec{r}}),\sum_{n=1}^NZ_n\right]=0,
$$
and is represented by a symmetric $N\times N$ matrix $H_1({\vec{r}})$ in the first excitation subspace. This Hamiltonian depends on $N$ parameters $\{r_i\}$, which can be thought of, for example, as the positions of the spins when they are coupled by a distance-dependent coupling strength. The desired eigenvalues are contained in the $N\times N$ diagonal matrix $\Lambda$\footnote{Note that one must select a consistent scheme for ordering the eigenvalues on the diagonal.}.

The algorithm starts with a first estimate ${\overrightarrow{r_{(0)}}}$ for ${\vec{r}}$. The matrix $H_1({\overrightarrow{r_{(0)}}})$ is diagonalized by $U_0$,
$$
H_1({\overrightarrow{r_{(0)}}})=U_0\Lambda(\identity+\epsilon \Delta_0)U_0^\dagger,
$$
where $\Delta_0$ is a diagonal matrix which encapsulates the errors in the energies, and $\epsilon$ is a small parameter. The second step of the algorithm should use a vector ${\overrightarrow{r_{(1)}}}={\overrightarrow{r_{(0)}}}+\epsilon\ {\overrightarrow{\delta r}}$, and it is our task to find the best choice for ${\overrightarrow{\delta r}}$. A unitary $U_1$ diagonalizes the Hamiltonian,
\begin{equation}
H_1({\overrightarrow{r_{(1)}}})=U_1\Lambda(\identity+\epsilon \Delta_1)U_1^\dagger.
\label{eqn:iep}
\end{equation}
Provided $\epsilon$ is small, $U_0$ and $U_1$ should be close to each other. As such, $U_1$ can be parametrized in terms of $\epsilon$,
\begin{equation}
U_1=U_0(\identity+i\epsilon Q)(\identity-i\epsilon Q)^{-1},	\label{eqn:param}
\end{equation}
where $Q$ is a Hermitian matrix containing information on the change in eigenvectors. This parametrization ensures that $U_0^\dagger U_1\rightarrow \identity$ as $\epsilon\rightarrow 0$, and the unitarity of $U_1$ can be seen by rewriting
$$
i\epsilon Q=(\identity+U_1^\dagger U_0)^{-1}(\identity-U_1^\dagger U_0).
$$
Given that the eigenvalues of $U_1^\dagger U_0$ must be of the form $e^{-i\theta}$, the eigenvalues of $Q$ are real ($\tan(\theta/2)$) i.e.~$Q$ is Hermitian. Conversely, for any Hermitian operator $Q$, the eigenvalues can be identified with $\tan(\theta/2)$, and thus $U_1^\dagger U_0$ must be unitary. The parametrization of Eqn.~(\ref{eqn:param}) can now be substituted into Eqn.~(\ref{eqn:iep}), and expanded in terms of $\epsilon$. The terms for $\epsilon^0$ cancel, and the terms for $\epsilon^1$ give
$$
\sum_n\delta r_nU_0^\dagger\left.\frac{\partial H_1}{\partial r_n}\right|_{\overrightarrow{r_{(0)}}}U_0=\Lambda \Delta_1-\Lambda \Delta_0+2i(Q\Lambda-\Lambda Q).
$$
The aim of the iteration should be to choose ${\overrightarrow{\delta r}}$ such that $\Delta_1=0$. Since the diagonal elements of the final term, $Q\Lambda-\Lambda Q$, are zero, all the eigenvalue information is encapsulated by the diagonal elements of the equations, while changing eigenvectors only affects the off-diagonal elements. Thus, the previous equation can be rewritten for just the diagonal elements,
\begin{equation}
M.{\overrightarrow{\delta r}}={\vec{b}},
\label{eqn:inverse}
\end{equation}
where ${\vec{b}}$ is a vector of the diagonal elements of $-\Lambda \Delta_0$, and the $n^{th}$ column of the matrix $M$ is given by the diagonal elements of
$$
U_0^\dagger\left.\frac{\partial H_1}{\partial r_n}\right|_{\overrightarrow{r_{(0)}}}U_0.
$$
The solution to Eqn.~(\ref{eqn:inverse}) is the vector ${\overrightarrow{\delta r}}$, which gives the correct eigenvalues to $O(\epsilon^2)$. Provided $\overrightarrow{\delta r}$ is small in comparison to $\overrightarrow{r_{(0)}}$, this protocol iterates, squaring the error at each step. Hence, to achieve an accuracy of $\epsilon_0$, only $O(\log(\epsilon_0))$ iterations are needed.
 Since there are efficient algorithms for solving Eqn.~(\ref{eqn:inverse}), and because the matrices to be diagonalized are symmetric (hence there are efficient diagonalization procedures, such as Householder reductions \cite{numerical_recipes}), the cost of each iteration scales polynomially, $O(N^3)$, with the number of qubits in the chain \cite{numerical_recipes}, and the required parameters can be found with an efficient classical computation.

\subsection{Alternative State Transfer Systems} \label{sec:harm}

In this section, we have derived the necessary and sufficient conditions for perfect state transfer on a chain of spins coupled by an $XX$ Hamiltonian, and found an analytic solution which is optimal with respect to properties such as state transfer time. These same conditions can be applied to a number of other systems which are excitation preserving, although it is important to note that in the next section, when we move to using many excitations, the equivalences no longer hold.

The first case that we shall consider is a modulated anisotropic Heisenberg Hamiltonian
\begin{eqnarray}
H_{\text{heis}}&=&\half\sum_{n=1}^{N-1}J_n(X_nX_{n+1}+Y_nY_{n+1}+\Delta_nZ_nZ_{n+1})	\nonumber\\
&&-\sum_{n=1}^Nb_nZ_n.	\nonumber
\end{eqnarray}
Within the first excitation subspace, the original matrix $H_1$ can be recovered by setting
$$
b_n=B_n+\half(J_n\Delta_n+J_{n-1}\Delta_{n-1}),
$$
up to an identity matrix. One consequence of this, however, is that it is impossible to achieve perfect state transfer with all the $b_n$ set to 0, unlike lemma \ref{lemma:zero} \cite{marcin}.

The other case of interest is a chain of harmonic oscillators \cite{plenio:0}, although this can be further generalized to a chain of $q$-deformed oscillators \cite{innocent}. As before, there is a set of $N$ sites, and at each site $n$ we introduce the creation and annihilation operators $a^\dagger_n$ and $a_n$, which obey the bosonic commutation properties $[a_n,a_m^\dagger]=\identity\delta_{n,m}$ and $[a_n,a_m]=[a_n^\dagger,a_m^\dagger]=0$. A nearest-neighbor Hamiltonian of the form
\begin{equation}
H_{\text{ho}}=2\sum_{n=1}^NB_na^\dagger_na_n+\sum_{n=1}^{N-1}J_n(a_{n+1}^\dagger a_n+a^\dagger_na_{n+1})	\label{eqn:ho}
\end{equation}
is also excitation preserving,
$$
\left[H_{\text{ho}},\sum_{n=1}^Na_n^\dagger a_n\right]=0.
$$
Starting from the vacuum state, $\ket{\underline{0}}$, the single excitation subspace can be defined by $\ket{\underline{n}}=a_n^\dagger\ket{\underline{0}}$, and the Hamiltonian within this subspace is the same form as $H_1$. Thus, all properties of perfect state transfer are recovered.

\section{Higher Excitation Subspaces} \label{sec:highexcit}

\subsection{The Jordan-Wigner Transformation} \label{sec:JW}

The primary benefit of choosing to examine the nearest-neighbor $XX$ model rather than the Heisenberg model is that when we come to analyze the presence of many excitations on the chain, vast simplifications arise. This is because one can apply the Jordan-Wigner transformation \cite{JordanWigner}, which maps the excitations on the spins into {\em non-interacting} fermions. This mean that the individual fermions behave exactly as they did in the single excitation subspace, and we only have to take care of the exchange phase.

The Jordan-Wigner transformation is as follows. We introduce creation operators $a_n^\dagger$ for fermions at site $n$, and identify them with
$$
a_n^\dagger=\half\left(\prod_{m=1}^{n-1}Z_m\right)(X-iY)_n.
$$
Thus, one can rewrite the Hamiltonian $H$ as
$$
H_{JW}=\sum_{n=1}^{N-1}J_n(a^\dagger_n a_{n+1}+a^\dagger_{n+1}a_n)+2\sum_{n=1}^NB_na_n^\dagger a_n,
$$
up to an irrelevant $\identity$ term. Let's now take the ground state of the system, $\ket{\underline{0}}$, the state with no fermions. The eigenvectors of $H_{JW}$ in the one fermion sector can be written as
$$
\sum_{m=1}^N\lambda_{n,m}a_m^\dagger\ket{\underline{0}}
$$
with eigenvalues $\lambda_n$. Our claim that the $H_{JW}$ corresponds to non-interacting fermions needs to be shown by diagonalizing the full Hamiltonian, not just the one fermion subspace. We therefore introduce new operators $b_n^\dagger$, and propose that they satisfy
$$
b_n^\dagger=\sum_{m=1}^N\lambda_{n,m}a_m^\dagger,
$$
with the intent to show that 
$$H_{JW}=\sum_n\lambda_nb_n^\dagger b_n.$$
Given that the $\{\lambda_{n,m}\}$ define the eigenvectors in the first excitation subspace, they must satisfy
$$
\lambda_n\lambda_{n,m}=B_m\lambda_{n,m}+J_{m-1}\lambda_{n,m-1}+J_m\lambda_{n,m+1}.
$$
This helps when we expand the $b_n$ in $\sum_n\lambda_nb_n^\dagger b_n$,
$$
\sum_{n,m,k=1}^N(B_m\lambda_{n,m}+J_{m-1}\lambda_{n,m-1}+J_m\lambda_{n,m+1})a_m^\dagger \lambda_{n,k}^*a_k.
$$
Now we use an orthogonality relation for the eigenvectors,
$$
\bra{\underline{0}}a_ma_k^\dagger\ket{\underline{0}}=\sum_{n=1}^N\lambda_{n,m}\lambda_{n,k}^*=\delta_{m,k},
$$
which reveals that
\begin{eqnarray}
\sum_n\lambda_nb_n^\dagger b_n&=&\sum_{m=1}^NB_ma_m^\dagger a_m+J_{m-1}a_m^\dagger a_{m-1}+J_ma_m^\dagger a_{m+1}	\nonumber\\
&=&H_{JW}.	\nonumber
\end{eqnarray}
Therefore, the $\{b_n^\dagger\}$ describe independent fermions. Heuristically, this means that if excitations on either spins 1 or 2 (say) transfer perfectly, then a pair of excitations on both spins 1 and 2 transfer perfectly to a pair of excitations on spins $N-1$ and $N$ because they are independent. However, because the two fermions have exchanged (the one on spin 1 has gone past the one on spin 2), a multiplicative factor of $-1$ is also introduced.

The wedge product \cite{Osborne,Kay:2005e} is perhaps the most useful tool that we can bring to bear as a result of the Jordan-Wigner transformation. This is used to combine states of single excitations into valid states of multiple excitations. For example, $\ket{\underline{1}}\wedge\ket{\underline{2}}$ denotes excitations on spins 1 and 2. It has two crucial properties that convey the fermionic nature of the excitations. The first is exchange, $\ket{\underline{1}}\wedge\ket{\underline{2}}=-\ket{\underline{2}}\wedge\ket{\underline{1}}$, and the second is the exclusion principle, $\ket{\underline{1}}\wedge\ket{\underline{1}}=0$. As a result of the exchange property, we must pick a standard ordering for terms; the excitations are written in order from left to right, so a state of 5 qubits $\ket{01001}$ along a chain is written as $\ket{\underline{2}}\wedge\ket{\underline{5}}$.

From these basic properties of the wedge product, further properties are readily derived. Consider two vectors $\ket{a}=\sum_na_n\ket{\underline{n}}$ and $\ket{b}=\sum_nb_n\ket{\underline{n}}$, which are correctly normalized.
$$
\ket{a}\wedge\ket{b}=\sum_{m<n}(a_mb_n-a_nb_m)\ket{\underline{m}}\wedge\ket{\underline{n}}
$$
The normalization of the new state is
\begin{eqnarray}
\half\sum_{m,n}|a_mb_n-a_nb_m|^2&=&\sum_{m,n}\left(|a_m|^2|b_n|^2-a_mb_m^*a_n^*b_n\right)   \nonumber\\
&=&1-|\braket{b}{a}|^2. \nonumber
\end{eqnarray}
Again, we find that the two excitations cannot be in the same state, $\ket{a}\wedge\ket{a}=0$, although this is no longer restricted to the computational basis. The eigenvectors in, for example, the second excitation subspace, can be calculated from $\ket{\lambda_n}\wedge\ket{\lambda_m}$, which is equivalent to evaluating the Slater determinant \cite{Christandl:2004a}. The corresponding eigenvalues are $\lambda_n+\lambda_m$. As such, the evolution due to the Hamiltonian $H$ can be found from
$$
\left(e^{-iH_1t}\ket{a}\right)\wedge\left(e^{-iH_1t}\ket{b}\right).
$$
To verify that this statement is consistent, we expand $\ket{a}=\sum_na_n\ket{\lambda_n}$ and $\ket{b}=\sum_nb_n\ket{\lambda_n}$;
\begin{eqnarray}
e^{-iHt}(\ket{a}\wedge\ket{b})&=&e^{-iHt}\sum_{n,m}a_nb_m\ket{\lambda_n}\wedge\ket{\lambda_m}  \nonumber\\
&=&\sum_{n,m}a_nb_me^{-it(\lambda_n+\lambda_m)}\ket{\lambda_n}\wedge\ket{\lambda_m}    \nonumber\\
&=&\sum_{n,m}\left(a_ne^{-i\lambda_nt}\ket{\lambda_n}\right)\wedge\left(b_me^{-i\lambda_mt}\ket{\lambda_m}\right)    \nonumber\\
&=&\left(e^{-iH_1t}\ket{a}\right)\wedge\left(e^{-iH_1t}\ket{b}\right).    \nonumber
\end{eqnarray}

\subsection{Entanglement Generation} \label{sec:entgen}

One of the most trivial applications of multiple excitations on a spin chain is in the dynamic generation of entanglement. Consider starting a perfect state transfer chain in the state $\ket{+}\ket{0}^{\otimes N-2}\ket{+}$, where $\ket{+}=(\ket{0}+\ket{1})/\sqrt{2}$, and allow it to evolve for a time $t_0$. In that time, the two $\ket{+}$ states exchange positions. However, only when two excitations are present do they generate an exchange phase. Thus, the system evolves from
$$
\half\left(\ket{\underline{0}}+\ket{\underline{1}}+\ket{\underline{N}}+\ket{\underline{1}}\wedge\ket{\underline{N}}\right)
$$
into the state
$$
\half\left(\ket{\underline{0}}+\ket{\underline{N}}+\ket{\underline{1}}+\ket{\underline{N}}\wedge\ket{\underline{1}}\right).
$$
Restoring the normal ordering of the operators, and neglecting the central $N-2$ spins, the final state is
$$
\half(\ket{00}+\ket{01}+\ket{10}-\ket{11}),
$$
which is a maximally entangled state between the two most distant points. This simple observation led to the proposal that perfect state transfer chains be used to generate 1D cluster states in a highly parallel process \cite{Jaksch:2004a}.

\subsection{System Initialization} \label{sec:sysinit}

The work on multiple excitations can also be used to remove the requirement of initializing the spin chain in the $\ket{\underline{0}}$ state before the transfer. Let us encode the unknown state $\ket{\psi}$ into the first two spins of the chain, as the state $\alpha\ket{01}+\beta\ket{10}$ \cite{kay-2006b}. We shall assume that the rest of the chain is in some unknown state $\ket{\vec x}$, where ${\vec x}$ is a string of $N-2$ bits denoting the position of excitations on the chain. If the transfer scheme works for all possible states $\ket{\vec x}$, then it also works for all possible superpositions and mixtures, $\rho$. We are only interested in transferring the state $\ket{\psi}$, and thus, there is no requirement to preserve the initial state of the chain.

At the start of the state transfer process, we have the state
$$
(\alpha\ket{\underline{1}}+\beta\ket{\underline{2}})\wedge\ket{\vec x}_{3\ldots N},
$$
and this transfers in time $t_0$ to the state
$$
(\alpha\ket{\underline{N}}+\beta\ket{\underline{N-1}})\wedge\ket{\vec x^T}_{1\ldots N-2},
$$
where $\ket{\vec x^T}$ is the mirror of state $\ket{\vec x}$, incorporating any phase of $(-1)$ that may result. Reasserting the normal ordering of the wedge product, this gives
$$
(-1)^{w_x}\ket{\vec x^T}_{1\ldots N-2}\wedge(\alpha\ket{\underline{N}}+\beta\ket{\underline{N-1}}),
$$
and therefore the encoded state $\ket{\psi}$ is transferred perfectly. For general initial states $\rho$ on the rest of the chain, the output state is not the re-ordered form of $\rho$ because of the different phases appearing on the vectors $\ket{\vec x^T}$, but this is of no concern to us. Heuristically, the process that we have gone through is to note that the only problem when transferring states in higher excitation subspaces is that exchange phases are generated, and may cause entanglement. However, this entanglement is generated between different excitation subspaces (yielding different numbers of exchange phases), and thus, by encoding within a subspace of fixed parity of excitation number, the state to be transferred does not become entangled.

Although encoding this state requires a (nearest-neighbor) entangling operation, the eventual read-out does not; one can measure spin $N-1$ in the $X$ basis, and remove the state from spin $N$ (up to a $Z$ rotation, depending on the measurement outcome).

In Sec.~\ref{sec:amp}, we will describe an alternative transfer scheme which does not require any encoding of the quantum state to be transmitted, and is nevertheless insensitive to the initial state of the chain. This is achieved by a unitary rotation applied to existing transfer schemes such that the unitary maps the encoded states described in this section to un-encoded states in the new scheme. The transverse Ising model \cite{paternostro:1,paternostro:2} also achieves this feat of not requiring an encoding by virtue of a special feature of the way it can be mapped into the $XX$ model that we considered here, as we will see in Sec.~\ref{sec:bogoliubov} (thereby explaining the close link between the coupling strengths and magnetic fields found in \cite{paternostro:1,paternostro:2} and the analytic solution we gave in Sec.~\ref{sec:anal}). On the other hand, the bosonic system of Eqn.~\ref{eqn:ho} does not have an entangling operation between different particles and hence works directly in higher excitation subspaces, without the need for an encoding.

\subsection{Transfer Rate} \label{sec:transrate}

The state transfer protocol involves placing an unknown state $\ket{\psi}$ on spin 1, waiting a time $t_0$, and finding the state perfectly transferred to spin $N$. With the added ability to consider higher excitation subspaces, we can now consider the state transfer chain as a channel for communicating quantum states \cite{Kay:2005e}, and then the interesting question is what rate of transmission of single-qubit states can be achieved?

Let us start with a state $\ket{\psi_1}=\alpha_1\ket{0}+\beta_1\ket{1}$ on spin 1, and let it evolve for a time $t_r$. At this time, we will introduce a state $\ket{\psi_2}$ onto spin 1 (tracing out the state of that spin). After further evolution for $t_0-t_r$, we might hope that a high quality version of $\ket{\psi_1}$ would appear at spin $N$, essentially unaffected by the introduction of $\ket{\psi_2}$. If so, this would allow transmission at a rate $1/t_r$, yielding a significant improvement over the original $1/t_0$ rate.

At time $t_r$, before $\ket{\psi_2}$ is added, the system is in the state $\alpha_1\ket{\underline{0}}+\beta_1e^{-iH_1t_r}\ket{\underline{1}}$. The state can be split into parts which are $\ket{0}$ and $\ket{1}$ on spin 1,
$$
e^{-iH_1t_r}\ket{\underline{1}}=(e^{-iH_1t_r}\ket{\underline{1}}-\gamma_1(t_r)\ket{\underline{1}})+\gamma_1(t_r)\ket{\underline{1}},
$$
which allows spin 1 to be traced out. The trace gives a mixture of two components,
$$
\alpha_1\ket{\underline{0}}+\beta_1(e^{-iH_1t_r}\ket{\underline{1}}-\gamma_1(t_r)\ket{\underline{1}})
$$
and
$$
\beta_1\gamma_1(t_r)\ket{\underline{1}}.
$$
This immediately reveals that the protocol can only be perfect if $\gamma_1(t_r)=0$.
The state $\ket{\psi_2}$ is subsequently added to spin 1, and the system is evolved for time $t_0-t_r$. In that time, the two contributions become
\begin{eqnarray}
&\alpha_1\alpha_2\ket{\underline{0}}+\alpha_2\beta_1\ket{\underline{N}}+(\alpha_1\beta_2-\alpha_2\beta_1\gamma_1(t_r))e^{-iH_1(t_0-t_r)}\ket{\underline{1}}&   \nonumber\\
&+\beta_1\beta_2(e^{-iH_1(t_0-t_r)}\ket{\underline{1}})\wedge\ket{\underline{N}}& \nonumber
\end{eqnarray}
and
$$
(\alpha_2\ket{\underline{0}}+\beta_2e^{-iH_1(t_0-t_r)}\ket{\underline{1}})\gamma_1(t_r)\beta_1.
$$
From this, one can calculate the reduced density matrix $\rho$ of qubit $N$, and then evaluate the fidelity
$$
F=\bra{\psi_1}\rho\ket{\psi_1},
$$
The full expression is too complex to reproduce here, but one can more readily understand an example case, of $\beta_1=\beta_2=1$, where it turns out that
$$
F=1-|\gamma_1(t_r)|^2-|\gamma_1(t_r)|^4.
$$
For the analytic solution of the state transfer chain, $\gamma_1(t)$ is a decreasing function of time ($0\leq t\leq t_0$), with
$$
\gamma_1(t)=\cos^{N-1}\left(\frac{\pi t}{2t_0}\right).
$$
As a result, it is possible to achieve a transfer rate of $\sqrt{N}/t_0$ for some constant error probability $\varepsilon$ on each $\ket{\psi_n}$ that is transferred.

Since our focus is on the perfect realization of these protocols, we will focus on the case $\gamma_1(t_r)=0$. Moreover, to achieve the target rate of $1/t_r$, we will require that $\gamma_1(nt_r)=0$ for all integers $n=1\ldots 2t_0/t_r-1$. It is not always possible to enhance this rate; the smallest non-trivial case is $N=3$ but, as a perfect state transfer system, this is uniquely specified (up to an overall numerical factor of $1/t_0$) if we assume no magnetic fields can be used. One can readily show that $\gamma_1(t<t_0)>0$.

Given that $\gamma_1(t_r)=0$ implies that an initial state $\ket{1}$ is orthogonal to the state at time $t_r$, the Margolus-Levitin Theorem \cite{marg} can be used to bound the achievable rate. This theorem states that the minimum time for one state to evolve into an orthogonal one must be at least
\begin{equation}
t_r\geq\frac{\pi}{2\sqrt{\bra{\underline{1}}H^2\ket{\underline{1}}-\bra{\underline{1}}H\ket{\underline{1}}^2}}=\frac{\pi}{2J_1},    \label{eqn:margolus}
\end{equation}
thereby giving an upper bound to the rate of perfect state transfer for any given scheme. The rate is also trivially bounded by $1/t_r\leq N/2t_0$, since it is impossible to store the states of more than $N$ qubits on an $N$-qubit chain. The following lemma allows us to prove a tighter bound of $1/t_r\leq N/4t_0$, although we conjecture a stronger condition; that there are no chains with $t_r<t_0$.

\begin{lemma}
If a set of eigenvalues is chosen to fulfill the perfect state transfer condition of Eqn.~(\ref{eqn:st_cond}), then a necessary and sufficient condition to perfectly achieve the rate $M/2t_0$ for integer $M$ is that all the $R_k$ for $k=0\ldots M-1$ should be equal, where
\begin{equation}
R_k=\sum_{n=1}^N\frac{(-1)^n}{B'(\lambda_n)},
\label{eqn:rate_cond}
\end{equation}
and the sum is restricted to those terms satisfying the condition
$$
\frac{t_0}{\pi}(\lambda_n-\lambda_1)\mod M=k.
$$
\label{lemma:rate}
\end{lemma}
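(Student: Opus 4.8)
The plan is to reduce the statement to a question about when the single‑excitation return amplitude $\gamma_1$ vanishes on a regular grid of times, and then to recognise the resulting requirement as saying that a length‑$M$ discrete Fourier transform is supported only on its zero mode.

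First I would establish, using the discussion immediately preceding the lemma together with the Jordan--Wigner picture of Sec.~\ref{sec:JW}, that perfectly achieving the rate $M/2t_0$ (i.e.\ injecting at spacing $t_r=2t_0/M$) is equivalent to $\gamma_1(nt_r)=0$ for $n=1,\dots,M-1$, where $\gamma_1(t)=\bra{\underline 1}e^{-iH_1t}\ket{\underline 1}$ as in Eqn.~(\ref{eqn:gamma}). The point is that whenever a fresh qubit is inserted on spin $1$ at a time $jt_r$, the only thing that can degrade the eventual transfer is a nonzero overlap on $\ket{\underline 1}$ of the amplitude still present from a previously inserted qubit $j'$, which after a delay $(j-j')t_r$ is governed by $\gamma_1((j-j')t_r)$ exactly as in the $\beta_1=\beta_2=1$ computation given earlier; and because the Jordan--Wigner fermions are non‑interacting, the exchange phases tracked by the wedge product never couple different excitation numbers once each such overlap vanishes. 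Hence fidelity one in every subspace is equivalent to the $M-1$ scalar conditions $\gamma_1(nt_r)=0$.

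Next I would expand $\gamma_1(t)=\sum_{n=1}^N|\alpha_n|^2e^{-i\lambda_nt}$ with $\alpha_n=\braket{\lambda_n}{\underline 1}$, and use the perfect‑transfer condition Eqn.~(\ref{eqn:st_cond}) to write $\lambda_m=\lambda_1+(\pi/t_0)p_m$ with $p_m=\tfrac{t_0}{\pi}(\lambda_m-\lambda_1)$ a non‑negative integer. Substituting $t=nt_r=2nt_0/M$ gives $\gamma_1(nt_r)=e^{-2in\lambda_1t_0/M}\sum_{m}|\alpha_m|^2\omega^{-np_m}$ with $\omega=e^{2\pi i/M}$, the prefactor being an irrelevant phase. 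Collecting the eigenvalues according to the residue $p_m\bmod M$ --- which is precisely the defining condition $\tfrac{t_0}{\pi}(\lambda_m-\lambda_1)\bmod M=k$ of $R_k$ in Eqn.~(\ref{eqn:rate_cond}) --- this becomes $\sum_{k=0}^{M-1}S_k\omega^{-nk}$, where $S_k=\sum_{m:\,p_m\equiv k}|\alpha_m|^2$. So the vanishing requirement says exactly that the discrete Fourier transform of $(S_0,\dots,S_{M-1})$ is zero at every frequency $n=1,\dots,M-1$; since the rows $(\omega^{-nk})_k$ for $n=1,\dots,M-1$ together with the constant row span $\mathbb{C}^M$, this holds if and only if $(S_k)$ is constant in $k$ (equivalently, inverting: $S_k=\tfrac1M\sum_n\widehat S_n\omega^{nk}$ is $k$‑independent iff $\widehat S_n=0$ for all $n\neq0$).

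Finally I would translate the $S_k$ into the $R_k$: by the Hochstadt formula Eqn.~(\ref{eqn:hoch}), $|\alpha_m|^2$ equals $(-1)^m/B'(\lambda_m)$ divided by the $k$‑independent (and, by the sign‑alternation of $B'(\lambda_m)$, sign‑definite) normalisation $\sum_{m'}(-1)^{m'}/B'(\lambda_{m'})$, so $S_k=R_k\big/\sum_{k'}R_{k'}$. Hence all the $S_k$ are equal iff all the $R_k$ are equal, completing both directions. The step I expect to be the main obstacle is the first one --- rigorously pinning down that "perfect rate $M/2t_0$'' is captured by exactly the conditions $\gamma_1(nt_r)=0$ for $n=1,\dots,M-1$, with no extra constraint surviving from the higher‑excitation sectors or from the output end of the chain; once that is in hand, the remainder is the clean Fourier computation plus a direct appeal to Eqn.~(\ref{eqn:hoch}).
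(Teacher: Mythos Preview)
Your proposal is correct and follows essentially the same approach as the paper: both reduce the rate condition to the vanishing of $\gamma_1(nt_r)$ for $n=1,\dots,M-1$, use Hochstadt's formula Eqn.~(\ref{eqn:hoch}) to express the weights $|\alpha_n|^2$ in terms of $(-1)^n/B'(\lambda_n)$, group the eigenvalues by the residue $\tfrac{t_0}{\pi}(\lambda_n-\lambda_1)\bmod M$, and then invert the discrete Fourier transform to conclude that all $R_k$ must coincide. The only cosmetic difference is that the paper invokes Hochstadt at the outset to write $\gamma_1$ directly in terms of the $R_k$, whereas you first work with $S_k=\sum_{m:p_m\equiv k}|\alpha_m|^2$ and convert to $R_k$ at the end; the Fourier step and the logical structure are identical.
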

\begin{proof}
Using Eqn.~(\ref{eqn:hoch}), we can write that
$$
\gamma_1(t)=\sum_{n=1}^N\frac{e^{-i\lambda_nt}}{(-1)^nB'(\lambda_n)\sum_{m=1}^N\frac{1}{(-1)^mB'(\lambda_m)}}.
$$
We will now demand that $\gamma_1(2mt_0/M)=0$ for integer $m=1\ldots M-1$ and $\gamma_1(2t_0)=1$. Using Eqn.~(\ref{eqn:rate_cond}), the terms in the sum can be split up to give
$$
\gamma_1(2mt_0/M)=\frac{\sum_{k=0}^{M-1}R_ke^{-i2\pi km/M}}{\sum_{k=0}^{M-1}R_k}=\delta_m.
$$
Using the Fourier Transform, these equations can be inverted to give that
$$
R_j=\frac{1}{M}\sum_{k=0}^{M-1}R_k,
$$
as required.
\end{proof}
Note that if $M$ is a composite number $M_1M_2$, then if the condition fails for either $M_1$ or $M_2$, it necessarily fails for $M$, since each of the $R_k$ for $M_1$ is given by a sum of $M_2$ of the $R_k$ for $M$. However, the converse is not true.
We conjecture that it is impossible to fulfill the condition of Lemma \ref{lemma:rate} for any $M>2$, although we only have a proof for $M>N/2$. In the case of $M=N$, since none of the terms $R_k$ can be 0 if they are to be equal, they must all be $R_k=(-1)^k/B'(\lambda_k)$. Given that $B'(\lambda_1)\neq -B'(\lambda_2)$ for $N>2$, it is immediate that it is impossible to perfectly achieve the rate $N/(2t_0)$. Similarly, for any $M>\lfloor N/2\rfloor+1$, the different $B'(\lambda_n)$ must be partitioned between all $R_k$, and it follows that at least two terms must be of the form $(-1)^n/B'(\lambda_n)$, and hence different.

\subsection{Sequential Quantum Storage}

While we may have no way to construct a perfect state transfer chain with high rate (retaining the perfect transfer property), it turns out that one can design a non-symmetric chain for which $\bra{\underline{1}}e^{-iH_1t_rn}\ket{\underline{1}}=0$ for $n=1\ldots 2t_0/t_r-1$, which means that the states can be read in sequentially at spin 1, and read out sequentially from the same spin at a time $2t_0$ later. Recall that
$$
\gamma_1(t)=\sum_{n=1}^N|\alpha_n|^2e^{-i\lambda_nt}.
$$
In order to get periodic revivals at the input spin, then even without symmetry, the chain must have eigenvalues such that the ratio of differences are rational (following an identical argument to that of Lemma \ref{lemma:symmetry_match}). So, we can select these to be the simplest conceivable case, $2t_0\lambda_n/\pi=-N+1,-N+3\ldots N-3,N-1$. In the previous subsection, we realized that if $t_r=2t_0/N$, then $|\alpha_n|^2=1/N$. To verify that this will work, we calculate
$$
\gamma_1(mt_r)=\frac{1}{N}\sum_{n=1}^Ne^{\frac{i2\pi m(-N-1+2n)}{2N}}=0,
$$
for any integer $m$ which is not a multiple of $N$. With the $\{|\alpha_n|^2\}$ and $\{\lambda_n\}$ specified, an iterative scheme for deciding the coupling strengths is evident; we simply calculate
$$
\bra{\underline{1}}H_1^m\ket{\underline{1}}=\sum_n|\alpha_n|^2\lambda_n^m
$$
for all integers $m$, and solve for the coupling strengths. It can be verified that a matrix with $B_n=0$ and
$$
J_n^2=\frac{n^2(N-n)(N+n)}{(2n-1)(2n+1)}
$$
has the desired properties, and $N$ single-qubit states can be sequentially stored in the chain, which must be the optimal solution in terms of density of storage.

Provided the bits are read-out in the reverse sequence to which they were stored, there are no undesired side-effects from the exchange of fermions, because no fermions are exchanged. On the other hand, if one reads them out in the same sequence as they were read-in, there is a controlled-phase gate enacted between every single qubit. This could be a very useful feature for generating GHZ states -- it would suffice to set all the states $\ket{\psi_n}=\ket{+}$, and the set of states that are output are equal, up to local unitaries, to an $N$-qubit GHZ state. By being selective over the order of removal, one can control which spins get entangled with which others\footnote{A state $\ket{\psi_n}$ becomes entangled, via a controlled-phase gate, with any state that is still on the chain at the time of removal, and which was placed on the chain after $\ket{\psi_n}$ was. All controlled phase gates that were enacted by previous removals remain unaffected.}. For storage of information, one can avoid the controlled-phase gates by again using a dual-rail encoding, as in Sec.~\ref{sec:sysinit}, thereby halving the number of qubits that can be stored.

This storage device can be used as the basis for a quantum computer, where all control is mediated through spin 1. Such a device is referred to as a Universal Quantum Interface \cite{UQI,Burgarth:07,sgs,Kay:08,Kay:08b}. The formulation of the control sequences for this scheme is particularly efficient \cite{Kay:09uqi}, although many other $XX$ model spin chains can also be used \cite{Kay:09uqi,Burgarth:uqi}.

\subsection{Other Fermionic Models} \label{sec:bogoliubov}

In this section, we have discussed the properties of multiple excitations on a spin chain. This was achieved by mapping the $XX$ model into a model of non-interacting fermions hopping on a nearest-neighbor chain, and proving transfer in these systems. However, the class of spin models that can be mapped into the same set of non-interacting fermions is much broader, which must mean that perfect state transfer is achievable in these systems, retaining the properties of transfer rate, independence of initial state etc.~\cite{paternostro:1,paternostro:2}.

We show this by considering the class of Hamiltonians
$$
H^{\gamma}=\half\sum_{n=1}^{N-1}J_n((1+\gamma_n)X_nX_{n+1}+(1-\gamma_n)Y_nY_{n+1})-\sum_{n=1}^NB_nZ_n.
$$
This includes our original $XX$ model as the special case $\gamma_n=0$ for all $n$, and the transverse Ising model, considered in \cite{paternostro:1,paternostro:2}, has $\gamma_n=1$. As before, the first step is to perform the Jordan-Wigner transformation,
\begin{eqnarray}
H^{\gamma}_{JW}&=&\sum_{n=1}^{N-1}J_n(a_n^\dagger a_{n+1}+a_{n+1}^\dagger a_n+\gamma a_n^\dagger a_{n+1}^\dagger+\gamma a_{n+1}a_n)	\nonumber\\
&&+2\sum_{n=1}^NB_na_n^\dagger a_n.	\nonumber
\end{eqnarray}
Following \cite{lieb}, this can be written in a general form of
$$
H^{\text{general}}=\sum_{n,m=1}^NA_{nm}a_n^\dagger a_m+\half B_{nm}(a_n^\dagger a_m^\dagger+a_ma_n)
$$
where $A$ is a real, symmetric, matrix ($A=A^T$) and $B$ is real and anti-symmetric ($B=-B^T$). We want to diagonalise this system by introducing new fermionic operators $c_k^\dagger$ such that
\begin{equation}
H^{\text{general}}=\sum_{k=1}^N\mu_kc_k^\dagger c_k,	\label{eqn:bogoliubov}
\end{equation}
and these new modes obey the canonical commutation relations
\begin{eqnarray}
\{c_n^\dagger,c_m^\dagger\}=\{c_n,c_m\}&=&0	\nonumber\\
\{c_n^\dagger,c_m\}&=&\delta_{n,m}.	\nonumber
\end{eqnarray}
Our intention is to expand these new modes in terms of the old modes,
$$
c_k^\dagger=\sum_ng_{kn}a_n^\dagger+h_{kn}a_n.
$$
In order for Eqn.~(\ref{eqn:bogoliubov}) to hold, it must be that
$$
\left[c_k,H^{\text{general}}\right]=\mu_kc_k,
$$
which gives a set of simultaneous equations
\begin{eqnarray}
\mu_kg_{kn}&=&\sum_mg_{km}A_{mn}-h_{km}B_{mn}	\nonumber\\
\mu_kh_{kn}&=&\sum_mg_{km}B_{mn}-h_{km}A_{mn}.	\nonumber
\end{eqnarray}
If we define new variables $\eta_{kn}=(g_{kn}+h_{kn})/\sqrt{2}$ and $\chi_{kn}=(g_{kn}-h_{kn})\sqrt{2}$, then these equations can be succinctly expressed as a matrix equation
\begin{eqnarray}
\mu_k\left(\begin{array}{c} \eta_k \\ \chi_k \end{array}\right)&=&\left(\begin{array}{cc}
0 & A+B \\
A-B & 0 \end{array}\right)\left(\begin{array}{c} \eta_k \\ \chi_k \end{array}\right)	\nonumber
\end{eqnarray}
so the combined vector $\left(\begin{array}{c} \eta_k \\ \chi_k \end{array}\right)$ is an eigenvector of a $2N\times 2N$ Hermitian matrix. We can write that
\begin{eqnarray}
\{c_n^\dagger,c_m^\dagger\}&=&\sum_kg_{nk}h_{mk}+g_{mk}h_{nk}	\nonumber\\
&=&\sum_k\eta_{nk}\eta_{mk}-\chi_{nk}\chi_{mk}	\nonumber
\end{eqnarray}
and, similarly,
\begin{eqnarray}
\{c_n^\dagger,c_m\}&=&\sum_kg_{nk}g_{mk}+h_{mk}h_{nk}	\nonumber\\
&=&\sum_k\eta_{nk}\eta_{mk}+\chi_{nk}\chi_{mk}.	\nonumber
\end{eqnarray}
However, these correspond directly to the orthogonality relations for the eigenvectors, and hence the canonical commutation relations hold.

Let us take, for instance, the special case of $\gamma=1$, and a nearest-neighbor interaction. This corresponds to the transverse Ising model, as considered in \cite{paternostro:1,paternostro:2}. If we index the basis elements of
$$
M=\left(\begin{array}{cc}
0 & A+B \\
A-B & 0 \end{array}\right)
$$
by $\ket{\underline{n}}$ for $n=1\ldots 2N$, then we have
\begin{eqnarray}
M\ket{\underline{N+n}}&=&B_n\ket{\underline{n}}+2J_{n-1}\ket{\underline{n-1}}	\nonumber\\
M\ket{\underline{n}}&=&B_n\ket{\underline{N+n}}+2J_{n}\ket{\underline{N+1+n}}	\nonumber
\end{eqnarray}
This is exactly the nearest-neighbor hopping Hamiltonian of $2N$ qubits. Hence, we can take any perfect state transfer solution for $2N$ qubits with 0 magnetic field and coupling strengths $K_n$ and identify $B_{n}=K_{2n-1}$ and $2J_{n}=K_{2n}$, and this is guaranteed to give perfect state transfer in this new model. For instance, if we start from a state $\ket{\underline{0}}$, defined such that $c_k\ket{\underline{0}}=0$ for all $k$, then we can create a state $\alpha\ket{\underline{0}}+\beta a_1^\dagger\ket{\underline{0}}$. As before the $\ket{\underline{0}}$ component is an eigenstate, and remains unchanged. The state $a_1^\dagger\ket{\underline{0}}$ corresponds to $\ket{\underline{1}}+\ket{\underline{N+1}}$ in $M$, and gets transferred to $\ket{\underline{N}}+\ket{\underline{2N}}$, which therefore outputs $a_N^\dagger\ket{\underline{0}}$, so the state is perfectly transferred.

\section{Noise Tolerance} \label{sec:4}

In a practical situation, no matter how well our system is engineered, it will never be perfectly protected from an environment (although one might hope that restricting access to the chain to just the ends would allow us to protect the bulk of the chain to a large degree). Thus, it is worthwhile to make some consideration of errors; all this work would be for nothing if the moment a small degree of noise is introduced, the quality of transfer were to drop radically.

The presence of static defects, i.e.~manufacturing errors, is a well-studied problem, resulting in the phenomenon of Anderson localization \cite{anderson} for sufficiently large defects. Small, perturbative, errors lead to perturbative errors in the arrival fidelities \cite{simone}. Instead, we consider dynamical noise processes, singling out two independent models, local dephasing and local bit flips. The individual behaviors of these can be expected to be potentially very different, although we do not expect either to be excessively harmful. In the following subsection, we will discuss the case of dephasing noise. The case of spin flips is closely related to that of transfer rates and the initial state of the system (Secs.~\ref{sec:transrate} and \ref{sec:sysinit}), where we learned that the effect of the bit flip is very small if the transferred state is not present on that particular spin at that particular time. Unfortunately, rigorous calculations for bit-flip noise are extremely complicated, and not informative, as they necessarily involve the complete $2^N$-dimensional Hilbert space of the system. More detailed examinations have been made in \cite{Burgarth,Cai}.

Ultimately, in the implementation of a quantum computation, the computational qubits are encoded into error correcting codes. If one were to send only one physical qubit along the state transfer chain at a time, the end result would be independent errors on the physical spins, which error correcting codes, and the theory of fault-tolerance \cite{nielsen}, are well tuned to handling. There would be no memory effect since we know how to transmit through a chain without any effect due to the initial state of the chain (Sec.~\ref{sec:sysinit}), which is the only way any initially independent noise could potentially build up any correlation.

Of further interest is the fact that it has recently been shown that for general spin networks, where perfect state transfer is impossible in a coherent way, the transfer ability can be vastly {\em enhanced} by the presence of noise, and it seems that this may have significance for some biological processes \cite{plenio:bio}.

\subsection{Dephasing Noise}

Dephasing noise, the random application of $Z$ errors with probability $p$, independently on each lattice site, could be expected to be quite harmful since the relative phases that build up between sites during the state transfer are critical to the perfect transfer. For the sake of analyticity, we shall restrict to a much simpler model where dephasing noise is only applied at one time $t$ ($0\leq t\leq t_0$). Thus, our initial state
$$
\ket{\Psi}=\ket{\psi}\ket{0}^{\otimes N-1}
$$
evolves to
$$
\rho_{\text{out}}=\sum_{x\in\{0,1\}^N}(1-p)^{N-w_x}p^{w_x}\tilde{Z_x}\proj{\Psi}\tilde{Z_x}^\dagger,
$$
where $w_x$ denotes the Hamming weight of the bit string $x$, used to enumerate the set of spins on which the $Z$ rotations have been applied. The Hamiltonian evolution has been incorporated into
$$
\tilde{Z_x}=e^{-iH(t_0-t)}Z_xe^{-iHt}.
$$
If we denote the $n^{th}$ bit of $x$ by $x_n$, then
$$
\tilde{Z_x}\ket{\Psi}=e^{-iHt_0}\ket{\Psi}-2\beta\sum_{x_n=1}\gamma_ne^{-iH(t_0-t)}\ket{n}.
$$
Our aim is to determine how well the state is transferred, by calculating the fidelity,
$$
F=\bra{\psi}\Tr_{1\ldots N-1}\left(\rho_{\text{out}}\right)\ket{\psi},
$$
and then averaging over all possible input states $\ket{\psi}$ to give $\langle F\rangle$. Ultimately, this yields that
$$
\langle F\rangle=1-\frac{2p(2-p)}{3}+\frac{2p(1-p)}{3}\sum_{n=1}^N|\gamma_n(t)|^4.
$$
Trivial bounds can be applied for the $|\gamma_n(t)|^4$ to give
$$
1-\frac{2p(2-p)}{3}+\frac{2p(1-p)}{3N}\leq\langle F\rangle\leq1-\frac{2p}{3},
$$
where the upper bound corresponds to the case of simple storage without transfer. We note that at any typical time during the transfer process, the state will be significantly spread out across the chain, and the fidelity is expected to be closer to the bottom end of the range, unless one uses a tactic such as encoding the state into a wavepacket which remains well localized during the transfer \cite{osborne04}.

\subsection{Independent Baths}

Another reasonable noise model, discussed in \cite{Burgarth}, is where each spin of the chain is coupled to a localized set of spins by a similar interaction. While we will consider the aspects of the proof necessary for the class of chains that we are interested in, note that the proof of \cite{Burgarth} is remarkably general, and essentially any Hamiltonian that splits into excitation subspaces will exhibit almost identical properties.

\begin{figure}
\begin{center}
\includegraphics[width=0.45\textwidth]{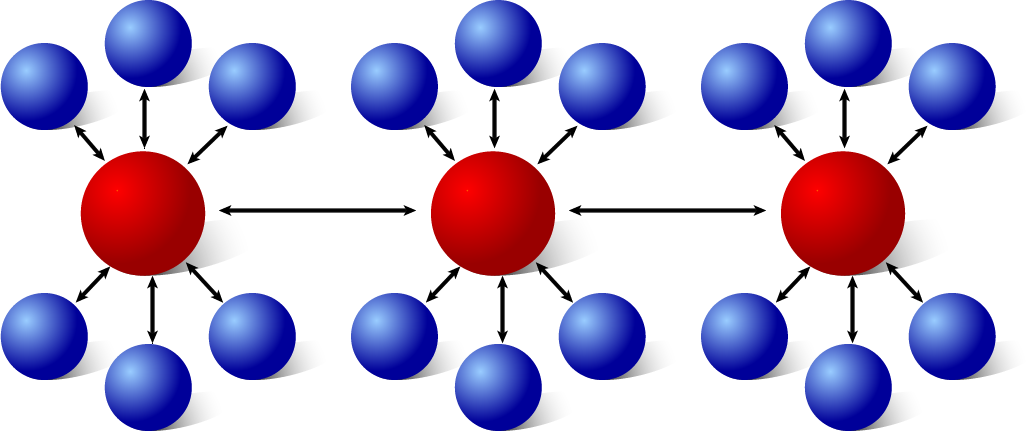}
\end{center}
\caption{The system spins (large), intended for perfect state transfer, are coupled to independent baths (small).} \label{fig:baths}
\end{figure}

Our aim, as previously, is to transfer states through a specific chain,
$$
H_S=\half\sum_{n=1}^{N-1}J_n(X_nX_{n+1}+Y_nY_{n+1})-\sum_{n=1}^NB_nZ_n,
$$
but each of the spins $n$ will also be coupled to an independent bath composed of $M_n$ spins as depicted in Fig.~\ref{fig:baths},
$$
H_{B,n}=\half\sum_{m=1}^{M_n}g^n_m(X_nX_{(n,m)}+Y_nY_{(n,m)}),
$$
yielding a total Hamiltonian
$$
H=H_S+\sum_{n=1}^NH_{B,n}.
$$
It is useful to observe that if an excitation starts on one of the system spins $n$, then while it can hop arbitrarily to other system spins, if it appears on a bath spin, it can only appear in the states
$$
\frac{1}{G_n}\sum_{m=1}^{M_n}g^n_m\ket{\underline{(n,m)}},
$$
where $G_n^2=\sum_{m=1}^{M_n}(g^n_m)^2$ \cite{Kay:2005b}. This means that a lot of the complexity of the bath spins can be eliminated if we restrict to the single excitation subspace, with
$$
H_{B,n}'=\half G_n(X_nX_{n'}+Y_nY_{n'}),
$$
where $n'$ is a single effective spin which acts as the bath for spin $n$. We now assume that all the $G_n$ are equal\footnote{This makes no assumptions regarding the individual $g^n_m$, so one intriguing possibility would be to deliberately introduce one extra spin at each site whose $g^n$ can be controlled in order to tune the $G_n$.}. The eigenvectors of the system in the absence of the bath can be written as
$$
\ket{\lambda_m}=\sum_{n=1}^N\lambda_{m,n}\ket{\underline{n}},
$$
allowing us to make an ansatz for the eigenstates of $H'=H_S+\sum_nH_{B,n}'$:
$$
\ket{\Lambda_m^k}=\frac{1}{\sqrt{2}}\sum_{n=1}^N\lambda_{m,n}(\ket{\underline{n}}+(-1)^k\ket{\underline{n'}}).
$$
These satisfy
$$
\bra{\Lambda_m^k}H'\ket{\Lambda_n^l}=\delta_{m,n}((-1)^kG\delta_{k,l}+\half\lambda_n),
$$
so although $H'$ is not instantly diagonalized, it is reduced to $2\times 2$ sub-blocks which are diagonalizable to give the eigenvectors and energies.
$$
E^k_n=\half(\lambda_n+(-1)^k\sqrt{4G^2+\lambda_n^2})
$$
Having solved for these, we can now turn our attention to how well this system implements state transfer. The accuracy of transfer of the system in the absence of the baths at time $t$ is given by $\gamma_N(t)\equiv\gamma_1^*(t_0-t)$. In the presence of the baths, one can show two results. Firstly, in the weak coupling limit, the first correction to $\gamma_N'(t)$ is only of order $G^2$. In the strong coupling limit, one finds that
$$
\gamma_N'(t)=\cos(Gt)\gamma_N(t/2),
$$
showing that the transfer takes a factor of 2 longer, and is modulated by a fast oscillating term. This means that extremely high fidelity, or even perfect, state transfer is still possible!

\section{Perfect State Transfer in Higher Spatial Dimensions} \label{sec:highD}

With the solutions for perfect state transfer in 1D, it is straightforward to construct schemes for state transfer in larger spatial dimensions. Let us consider two different coupling schemes, one for $N$ qubits with nearest-neighbor couplings $\{J_n\}$ and magnetic fields $B_n=0$, and a second one for $M$ qubits, nearest-neighbor couplings $\{K_n\}$ and magnetic fields $B_n=0$. Both have the same transfer time $t_0$. Now consider an $N\times M$ lattice of qubits, indexed by their positions $(x,y)$. A Hamiltonian of the form
\begin{eqnarray}
H_{2D}&=&\half\sum_{i=1}^{N-1}\sum_{j=1}^MJ_n(X_{(i,j)}X_{(i+1,j)}+Y_{(i,j)}Y_{(i+1,j)}) \nonumber\\
&&+\half\sum_{i=1}^{N}\sum_{j=1}^{M-1}K_n(X_{(i,j)}X_{(i,j+1)}+Y_{(i,j)}Y_{(i,j+1)}) \nonumber
\end{eqnarray}
performs state transfer between any diagonally opposite points $(i,j)$ and $(N+1-i,M+1-j)$; the two dimensions behave independently, and the eigenvectors of the first excitation subspace factorize into the form
$$
\ket{\lambda^{N,M}_{n,m}}=\sum_{i=1}^N\sum_{j=1}^M\lambda_{n,i}^N\lambda_{m,j}^M\ket{\underline{(i,j)}}
$$
where the eigenvectors for the two schemes are
$$
\begin{array}{c}
\ket{\lambda^N_n}=\sum_{i=1}^N\lambda_{n,i}^N\ket{\underline{i}} \\
\ket{\lambda^M_m}=\sum_{j=1}^M\lambda_{m,j}^M\ket{\underline{j}}.
\end{array}
$$
The eigenvalues satisfy
$$
\lambda^{N,M}_{n,m}=\lambda^N_n+\lambda^M_m.
$$
The Jordan-Wigner transformation does not apply to $H_{2D}$, so perfect state transfer is not guaranteed in the presence of multiple excitations for the $XX$ model. Thus, the question of whether full state mirroring can occur on 2D lattices still remains open, although if one assumes certain symmetry conditions on the coupling strengths, it can be shown that mirroring is impossible \cite{Kay:square}. On the other hand, were one to use a bosonic model, such as that of coupled Harmonic oscillators, this transfer must work in all excitation subspaces.

The described technique is not limited to combining two 1D chains into a 2D Hamiltonian. The introduction of the third chain allows the construction to be extended to 3D, and so on {\em ad infinitum}. The special case of applying this protocol with the chain of two qubits generates a uniformly coupled hypercube, as studied in \cite{Kay:2004c}, and is the basis (or special case) of several of the perfect transfer network variants \cite{facer, simone2}.

\section{Transformations and Manipulations} \label{sec:6}

The motivation that is often given for the study of perfect state transfer is that these devices might be realized in quantum computers, simplifying some tasks. Of course, state transfer is not the only task that one might want to realize. Having laid the ground work of perfect state transfer, however, it is vastly simpler to discover how these other `gadgets' might be realized, as we will momentarily examine.

Nevertheless, the engineering constraints on these modulated chains are quite stringent, and may not simplify anything at all. In that case, is all that we have learned wasted? Certainly not! It turns out that the study of perfect state transfer is a fantastically useful constructive technique to design all sorts of schemes based on Hamiltonian evolution, whether these be constructive or destructive effects. Perhaps one wants to show how evolution of a Hamiltonian can implement a quantum computation, or perhaps one wants to design a perturbation or interaction with an environment that disrupts a coherent effect (thereby giving bounds on the range of usefulness of that effect) \cite{Kay:09}. The details of these individual calculations will not concern us here, but in Sec.~\ref{sec:universal}, we will describe the basic method that one uses.

\subsection{Interferometry and W-state Preparation}

We can view the action of the Hamiltonian $H_1$ ($H$ acting on the first excitation subspace) as a discrete hopping along a chain of $N$ orthogonal states $\{\ket{\underline{n}}\}$;
$$
H_1\ket{\underline{n}}=\left\{\begin{array}{cc}
B_1\ket{\underline{1}}+J_1\ket{\underline{2}}  & n=1   \\
J_{N-1}\ket{\underline{N-1}}+B_N\ket{\underline{N}}    & n=N   \\
J_{n-1}\ket{\underline{n-1}}+B_n\ket{\underline{n}}+J_n\ket{\underline{n+1}}   & \text{otherwise}
\end{array}\right.
$$
By selecting the $J_n$ and $B_n$ so that they correspond to a perfect state transfer setting, we are guaranteed that the Hamiltonian evolves states $\ket{\underline{n}}$ into $\ket{\underline{N+1-n}}$ in time $t_0$, no matter what these states are. As such, we can construct a plethora of different protocols.

\begin{figure}
\begin{center}
\includegraphics[width=0.45\textwidth]{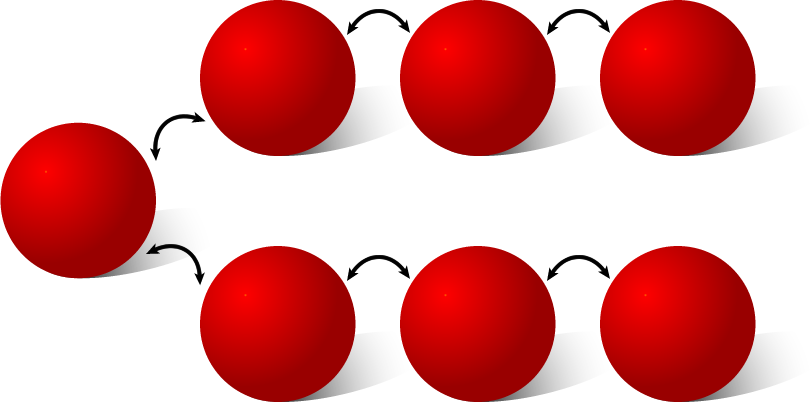}
\end{center}
\caption{Two parallel chains, joined at the start, can be used for entanglement generation.} \label{fig:ent_gen}
\vspace{-0.5cm}
\end{figure}

The simplest of these is involved in producing a beam-splitter action. We introduce a new system composed of two parallel chains ($a$ and $b$) of spins, joined at one end, spin 1, as depicted in Fig.~\ref{fig:ent_gen}. The Hamiltonian of this system can be written as
\begin{eqnarray}
&\sum_{n=2}^{N-1}\sum_{i\in\{a,b\}}B_nZ_{n,i}+J_n(X_{n,i}X_{n+1,i}+Y_{n,i}Y_{n+1,i})& \nonumber\\
&+B_1Z_1+\frac{J_1}{\sqrt{2}}\sum_{i\in\{a,b\}}(X_1X_{2,i}+Y_1Y_{2,i}).&  \nonumber
\end{eqnarray}
Using the position basis as before, i.e.~$\ket{\underline{2,a}}$ indicates the presence of an excitation on spin $2a$ and nowhere else, we can write a set of orthogonal states to be $\ket{\tilde 1}=\ket{\underline{1}}$ and
$$
\ket{\tilde n}=\frac{1}{\sqrt{2}}(\ket{\underline{n,a}}+\ket{\underline{n,b}}).
$$
One can then calculate the effect of the new Hamiltonian on these states, and see that it maps to the action of $H_1$. As a result, if we start with a state $\ket{1}$, and allow it to evolve for a time $t_0$, we end up with the state $\ket{\tilde N}$, which is a maximally entangled state between two spins. This is easily generalized to creating a W-state on $M$ qubits simply by introducing $M$ parallel chains, and joining them all to spin 1 with the coupling strength $J_1/\sqrt{M}$. These manipulations were originally expounded in \cite{Kay:2005b,plenio:0,plenio:05}, although this idea has subsequently resurfaced a number of times \cite{yang:05,damico}.

For the special case of $N=2$, but arbitrary $M$, there is a neat trick that could be quite helpful. We can readily enumerate the other eigenvectors of the system,
$$
\ket{W_k}=\sum_{j\in\{a\ldots m\}}e^{2\pi ikj/M}\ket{\underline{2,j}},
$$
and further note that the output from the evolution ($\ket{W_0}$) can be converted into any of these via the application of phase gates. This means that once the $t_0$ evolution has completed, application of local unitaries can trap the entangled state on the spins 2. Indeed, this is the basis of the routing networks in \cite{in_prep}.

In \cite{Kay:2005b}, these ideas were taken further, constructing networks whereby unitary rotations can be performed while states are transmitted along the spin chains. For example, an interferometer can be constructed using the network depicted in Fig.~\ref{fig:X}.

\begin{figure}[!tb]
\begin{center}
\includegraphics[width=0.45\textwidth]{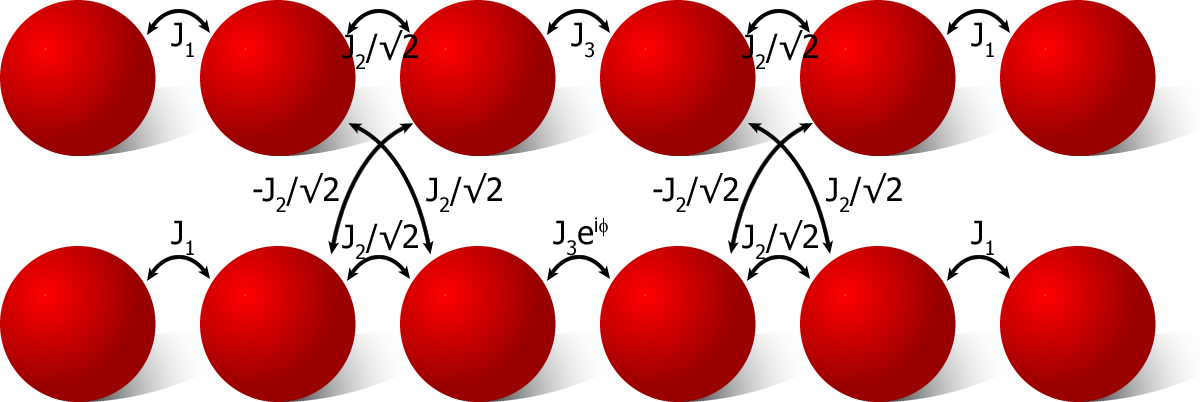}
\end{center}
\caption{A spin chain network ($N=6$) that acts like an interferometer, detecting the phase $\phi$ of the central coupling via the probability of which of the output ports an input excitation arrives at.}
\label{fig:X}
\vspace{-0.5cm}
\end{figure}

\subsection{An Alternate Solution for Entanglement Generation}

Let us consider a chain of odd length, $2N+1$, that is capable of perfect state transfer, $\ket{\underline{n}}\rightarrow\ket{\underline{2N+2-n}}$. We can rewrite this chain as two effective chains given by states in the symmetric subspace, $\{\ket{n_+}=(\ket{\underline{n}}+\ket{\underline{2N+2-n}})/\sqrt{2}\}$ and $\ket{\underline{N+1}}$, and the antisymmetric subspace $\{\ket{n_-}=(\ket{\underline{n}}-\ket{\underline{2N+2-n}})/\sqrt{2}\}$. The first of these chains is described by the interaction
\begin{eqnarray}
H_+&=&\sum_{n=1}^{N-1}J_n(\ket{n_+}\bra{{n+1}_+}+\ket{{n+1}_+}\bra{n_+})	\nonumber\\
&&+\sum_{n=1}^{N+1}B_n\proj{n_+}	\nonumber\\
&&+\sqrt{2}J_{N}(\ket{N_+}\bra{{N+1}_+}+\ket{{N+1}_+}\bra{N_+})	\nonumber
\end{eqnarray}
and the second is
\begin{eqnarray}
H_-&=&\sum_{n=1}^{N-1}J_n(\ket{n_-}\bra{{n+1}_-}+\ket{{n+1}_-}\bra{n_-})	\nonumber\\
&&+\sum_{n=1}^{N}B_n\proj{n_-}.	\nonumber
\end{eqnarray}
Furthermore, we know what evolutions these must give in the perfect transfer time; $\ket{n_+}\rightarrow\ket{n_+}$ and $\ket{n_-}\rightarrow-\ket{n_-}$. Now let us consider altering the chain slightly, $\tilde J_N=\sqrt{2}\cos\theta J_N$, $\tilde J_{N+1}=\sqrt{2}\sin\theta J_{N+1}$ (recall that $J_{N+1}=J_N$). We can now apply exactly the same analysis as before, except now we have that $\{\ket{\tilde n_+}=(\cos\theta\ket{n}+\sin\theta\ket{2N+2-n})\}$ and $\{\ket{\tilde n_-}=(\sin\theta\ket{n}-\cos\theta\ket{2N+2-n})\}$, leaving $H_\pm$ unchanged, and consequently leaving their action unchanged. Hence, were we to start our system in the state $\ket{\underline{1}}$, this can be represented as
$$
\ket{\underline{1}}=\cos\theta\ket{\tilde 1_+}+\sin\theta\ket{\tilde 1_-}
$$
and in the perfect state transfer time, this evolves to
$$
\cos\theta\ket{\tilde 1_+}-\sin\theta\ket{\tilde 1_-}=\cos(2\theta)\ket{\underline{1}}+\sin(2\theta)\ket{2N+1}.
$$
Thus, by selecting $\cos(2\theta)=1/\sqrt{2}$, the chain creates the maximally entangled state $(\ket{\underline{1}}+\ket{\underline{2N+1}})/\sqrt{2}$ between the furthermost points of the chain \cite{kwek}.

\subsection{Signal Amplification and GHZ-state Preparation} \label{sec:amp}

Another simple protocol that might be useful in a quantum computer is to amplify a signal before measurement \cite{kay-2006b}. The idea would be to take an unknown state $\ket{\psi}=\alpha\ket{0}+\beta\ket{1}$ and produce a GHZ state of the form $\alpha\ket{0}^{\otimes N}+\beta\ket{1}^{\otimes N}$, yielding a net magnetization that is more readily detected by a measuring device. The sensible approach here would seem to be to demand that the state $\ket{0}^{\otimes N}$ should be an eigenstate of our Hamiltonian $H_{\text{amp}}$, and that the states
$$
\ket{\tilde n}=\ket{1}^{\otimes n}\ket{0}^{\otimes N-n}
$$
should form the steps that we need. One can check that a Hamiltonian term $K_m=\half X_m(\identity-Z_{m-1}Z_{m+1})$ satisfies $K_m\ket{0}^{\otimes N}=0$ and
$$
K_m\ket{\tilde n}=\left\{\begin{array}{cc}
\ket{\widetilde{n+1}}   & m-1=n \\
\ket{\widetilde{n-1}}   & m=n \\
0                   & \text{otherwise}
\end{array}\right.
$$
($Z_{N+1}$ is just taken to be $1$), so $K_{m+1}$ replaces the role of the term $\half(X_mX_{m+1}+Y_mY_{m+1})$. Similarly, $Z_mZ_{m+1}$ replaces $Z_m$. Thus, we can write down the required Hamiltonian as
$$
H_{\text{amp}}=\sum_{n=1}^{N-1}J_nK_{n+1}+\sum_{n=1}^NB_nZ_nZ_{n+1}.
$$
Fig.~(\ref{fig:sig_strength}) gives a plot of how well the signal gets amplified as a function of time using the set of couplings from the analytic solution of perfect state transfer.

\begin{figure}
\begin{center}
\includegraphics[width=0.45\textwidth]{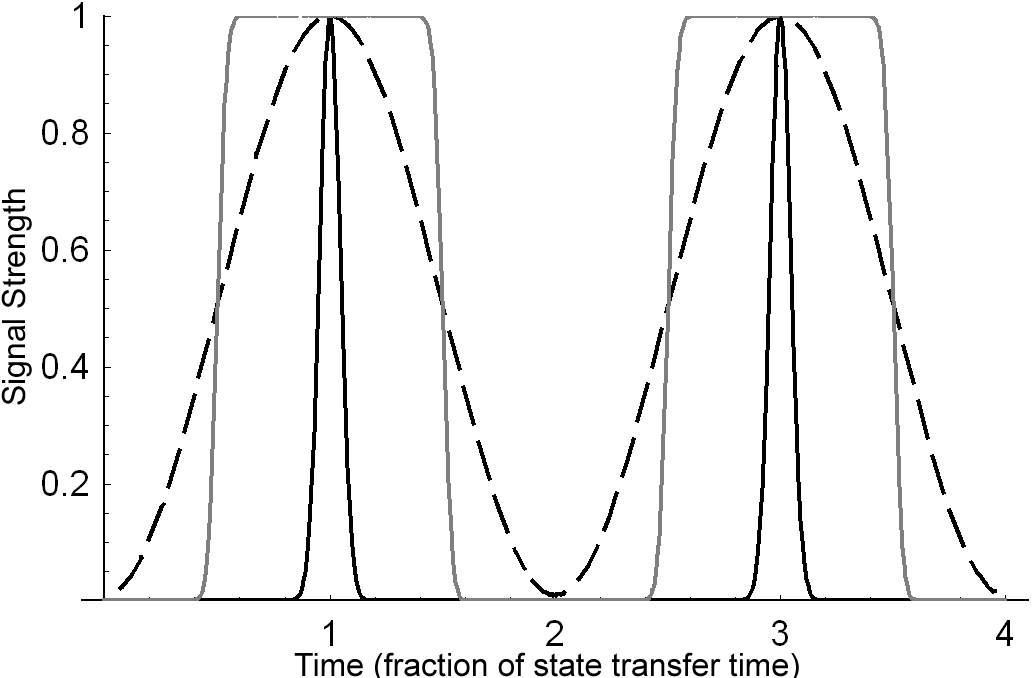}
\vspace{-0.2cm}
\end{center}
\caption[Signal Amplification]{Signal amplification for a system of $N=100$ qubits using $J_n=\sqrt{n(N-n)}$. Perfect amplification occurs at times $(2n+1)t_0$ and revival of the initial state occurs at times $2nt_0$. The solid line indicates the probability of the signal (initially $\ket{\tilde 1}$) being amplified to the desired final state $\ket{\tilde N}$. The dashed line shows the average signal strength (as a fraction of the maximum strength), and the gray line plots the probability that more than half of the spins have been flipped. Both the dashed and gray lines are largely independent of the number of qubits in the system.} \label{fig:sig_strength}
\vspace{-0.5cm}
\end{figure}

It turns out that one can show a unitary transformation between $H$ and $H_{\text{amp}}$ \cite{kay-2006b}, not just their first excitation subspaces. For example, they have an identical subspace structure, where the required commutation property of $H_{\text{amp}}$ is
$$
\left[H_{\text{amp}},\sum_{n=1}^{N}Z_{n}Z_{n+1}\right],
$$
so the number of boundaries between sets of $\ket{0}$s and $\ket{1}$s on a chain is preserved.
As a result, one can apply the results about multiple excitations to this system as well. One such result is that a single excitation on a spin $n\geq 2$ corresponds to a pair of excitations $\ket{\underline{n-1,n}}$ on $H$, and gets perfectly mirrored to spin $N+2-n$. Therefore, this chain is capable of performing perfect state transfer. Moreover, because we are using the effective two-excitation subspace, this behaves like the encoding of Sec.~{\ref{sec:sysinit}), which does not require preparation of the initial state of the system and yet, in this case, does not require any encoding/decoding of the quantum state to be transferred.

\subsection{Universal Quantum Computation} \label{sec:universal}

Ultimately, we may be interested in how far one can go using these simple constructions. Almost by definition, a Hamiltonian evolution cannot be more powerful than a quantum computer\footnote{The Hamiltonian evolution can be split into a number of small steps, and a Trotter decomposition allows each step to be written as $n$-qubit gates for an $n$-body Hamiltonian, and these can be further decomposed into a standard universal gate set within the circuit model of Quantum Computation \cite{nielsen}.}. So, is its power the same, or less? For now, we will not be concerned with constructing a Hamiltonian which is local with regard to any particular spatial distribution; we are merely demonstrating the general technique. Evidently, one can write down a quantum computation as a sequence of unitaries $U_1$ to $U_{N-1}$ which should be applied to the initial state $\ket{\psi_{\text{in}}}$. However, it is not necessarily true that the state after $n$ of these unitaries has been applied is orthogonal to the state at any other time step $m$, although this is required for the Hamiltonian evolution. To circumvent this problem, we introduce an ancillary system known as the clock. This clock will have $N$ orthogonal states denoted by $\ket{n}_C$. The sequence of states that we wish to step through are now
$$
\ket{\tilde n}=\ket{n}_C\otimes\left(U_{n-1}\ldots U_1\ket{\psi_{\text{in}}}\right)
$$
so that by initializing the clock in the state $\ket{1}_C$, in time $t_0$ it will arrive at the state $\ket{N}_C$, and the computation will have been implemented. It remains to write down a suitable Hamiltonian,
\begin{eqnarray}
H_{\text{comp}}&=&\sum_{n=1}^{N-1}J_n\ket{n+1}\bra{n}_C\otimes U_n+h.c. \nonumber\\
&&+\sum_{n=1}^NB_n\proj{n}_C\otimes\identity,   \nonumber
\end{eqnarray}
where $h.c.$ denotes the Hermitian conjugate. Thus, Hamiltonian evolution is as powerful as quantum computation. Such a construction, using perfect transfer, was originally introduced by Peres \cite{peres} as a modification of Feynman's scheme for Hamiltonian-based quantum computation \cite{feynman}. There are a number of ways in which this can be extended. For example, one can impose further structure on the Hamiltonian, such as a 1D translationally-invariant and $SU(2)$-invariant structure, and still recover that arbitrary quantum computations can be implemented \cite{Kay:08,karl,Kay:2009x}. These constructions can also make use of cellular automata for the implementation of the computation, which means that the gates to be applied are just repetitions of the same basic gate, and the computation is entirely determined by the initial state of the system. Alternatively, one can use these Hamiltonians to understand how difficult it is to find the ground state energies \cite{Kempe,gottesman,KSV02a,oliveira-2005,Kay:08}, or to discuss the entanglement properties of ground states \cite{irani}.

\section{Conclusions} \label{sec:conc}

In this article we have reviewed to concept of perfect state transfer, and derived the necessary and sufficient conditions for a nearest-neighbor $XX$ or Heisenberg Hamiltonian. In the case of $XX$ Hamiltonians, we have explored the properties of the transfer in the presence of many excitations which, amazingly, do not hinder the effect. We have also demonstrated how a firm grasp of perfect state transfer forms the basis of a constructive technique with a wide range of applicability, underlining the importance of this protocol, even if perfect state transfer chains are never realized in a laboratory.

\end{document}